\documentclass[10pt,twocolumn,twoside]{IEEEtran}
\usepackage{cite}
\usepackage{amsmath,amssymb,amsfonts}
\usepackage{textcomp}
\usepackage{verbatim}
\usepackage{titletoc}
\usepackage{caption}
\usepackage{subfigure}
\usepackage[ruled,vlined]{algorithm2e}
\usepackage{enumerate}
\usepackage[hidelinks]{hyperref} 
\usepackage{wrapfig}
\usepackage{graphicx}
\usepackage{subfigure}
\usepackage{bm} 
\usepackage{amsthm}
\usepackage{array}
\usepackage{xcolor}
\usepackage{longtable}
\usepackage{multirow}
\usepackage{multicol}
\usepackage{fancyhdr}
\usepackage{setspace}
\usepackage{booktabs}
\usepackage{xcolor}
\usepackage{algorithmicx,algpseudocode}
\usepackage{mathrsfs}
\usepackage[numbers,sort&compress]{natbib}
\usepackage{pifont}

\allowdisplaybreaks[4]
\theoremstyle{definition}
\newtheorem{thm}{Theorem}[]
\newtheorem{lemma}[thm]{Lemma}

\newtheorem{remark}[thm]{Remark}

\newtheorem{definition}[thm]{Definition}
\newtheorem{prop}[thm]{Proposition}

\newcolumntype{C}[1]{>{\centering\let\newline\\\arraybackslash\hspace{0pt}}m{#1}}
\def\BibTeX{{\rm B\kern-.05em{\sc i\kern-.025em b}\kern-.08em
   T\kern-.1667em\lower.7ex\hbox{E}\kern-.125emX}}
\newcommand{\cmark}{\ding{51}}%
\newcommand{\xmark}{\ding{55}}%
    
\makeatletter
\newcommand*\bigcdot{\mathpalette\bigcdot@{.5}}
\newcommand*\bigcdot@[2]{\mathbin{\vcenter{\hbox{\scalebox{#2}{$\m@th#1\bullet$}}}}}
\makeatother

\begin{document}
\title{Control Data Scheduling over Shared Communication Channels: A {Sparse} and Collision-Free Mechanism}
\author{Yuxing Zhong, Zhaohua Yang, Fuhai Nan, Daniel E.~Quevedo, and Ling Shi
\thanks{Yuxing Zhong was with the Department of Electronic and Computer Engineering, Hong Kong University of Science and Technology when this work was conducted, and is now with the School of Electrical and Computer Engineering, University of Sydney, Australia (email: \href{yuxing.zhong@connect.ust.hk}{yuxing.zhong@connect.ust.hk}).}
\thanks{Fuhai Nan, Zhaohua Yang and Ling Shi are with the Department of Electronic and Computer Engineering, Hong Kong University of Science and Technology, Hong Kong (e-mail: \href{zyangcr@connect.ust.hk}{zyangcr@connect.ust.hk}; \href{fnan@connect.ust.hk}{fnan@connect.ust.hk}; \href{eesling@ust.hk}{eesling@ust.hk}). Ling Shi is also with the Department of Chemical and Biological Engineering, Hong Kong University of Science and Technology.}
\thanks{Daniel E.~Quevedo is with the School of Electrical and Computer Engineering, University of Sydney, Australia (email:\href{daniel.quevedo@sydney.edu.au}{daniel.quevedo@sydney.edu.au}).}}
\maketitle

\begin{abstract}
In systems where controllers operate remotely and communicate with actuators over shared communication channels, it is crucial to efficiently schedule the control data transmission to reduce bandwidth usage and actuator effort. In this article, we investigate a novel scheduling mechanism that jointly coordinates control actions across different controllers and different time steps. We propose an algorithm based on the alternating direction method of multipliers (ADMM) to solve the resulting optimization problem. While ADMM is often treated as a black-box solver, the proposed algorithm offers a clear physical interpretation, ensures convergence to a stationary point, and is computationally efficient. Simulation results validate our theoretical results and demonstrate the effectiveness of our proposed algorithm.
\end{abstract}
\begin{IEEEkeywords}
Networked control system, optimization, scheduling, and sparsity.
\end{IEEEkeywords}

\section{Introduction}
Recent advancements in networked control systems (NCSs) have facilitated wireless sensing, estimation, and control~\cite{hespanha2007survey,walsh2001scheduling}. For example, controllers can now be deployed remotely, with control data transmitted to actuators via wireless communications. While this architecture reduces system wiring, thus enhancing system flexibility and lowering operational costs, it also introduces significant challenges. The limitation of communication bandwidth, coupled with the need to minimize actuator effort, necessitates the careful scheduling of control data transmission to optimize control performance within these resource constraints.

\subsubsection{Scheduling over Time}
Scheduling-over-time methods have been proposed to minimize actuator effort during the operational period. By designing sparse control data, actuators can remain inactive for the majority of the time. This is particularly important to systems aimed at conserving energy while reducing noise and vibration~\cite{chan2007state}. Building on this idea, Shi et al.~\cite{shi2012finite} and Nishida and Okano~\cite{nishida2024sparsity} developed scheduling strategies to minimize the quadratic cost while 
limiting actuator activation times. {Siami and Jadbabaie~\cite{siami2020separation} later proposed scheduling algorithms based on Hankel-based performance metrics.} Furthermore, Kishida~\cite{kishida2018hands} and {Nagahara et al.~\cite{nagahara2015maximum}} introduced a novel maximum hands-off design, which minimizes the number of non-zero control signals while satisfying performance or terminal-state requirements. However, these approaches are restricted to systems with a single controller-actuator pair. As modern systems increasingly incorporate multiple controllers and actuators, their applicability to more complex configurations becomes significantly constrained.

\subsubsection{Scheduling over Controllers}
In systems with multiple controllers and actuators communicating over shared networks, {simultaneous transmissions can result in packet collisions, causing all transmissions to fail}. Therefore, selecting a subset of controllers for transmission is essential to avoid communication collisions~\cite{zhong2024sparse}. While extensive research on control data scheduling has primarily focused on analyzing the controllability or stability of the system under specific scheduling strategies~\cite{pasqualetti2014controllability,summers2014optimal,siami2020deterministic,ballotta2024pointwise,joseph2020controllability,pasand2017structural}, the effectiveness of these approaches remains unclear. Jiao et al.~\cite{jiao2022actuator} addressed this issue by minimizing the quadratic cost of the system. However, their method restricts the system to a single active actuator at each time instant.

\begin{table*}[!htbp]
	\small
	\centering
	\caption{Overview of existing literature, where `S' and `M' denote `single' and `multiple', respectively.}
	
	\label{tb:summary}
	\begin{tabular}{ll|cc|cc|l}
		\hline
		\multicolumn{2}{l|}{}                                                                                                                                              & \multicolumn{2}{c|}{System Model} &                          &                                           &                                                            \\\cline{3-4}
		\multicolumn{2}{l|}{\multirow{-2}{*}{}}                                                                                                                            & Plant      & Actuator  & \multirow{-2}{*}{\begin{tabular}[c]{@{}c@{}}Vector\\  Form\end{tabular}} & \multirow{-2}{*}{\begin{tabular}[c]{@{}c@{}}Active\\  Actuators\end{tabular}} & \multirow{-2}{*}{Objective}                                \\ \hline
		\multicolumn{1}{l|}{}                                   & Shi et   al.~\cite{shi2012finite}                          & S     & S               & \cmark                      & NA                                        & Minimize the  quadratic cost                                  \\
		\multicolumn{1}{l|}{}                                   & Nishida and   Okano~\cite{nishida2024sparsity}             & S     & S               & \cmark                      & NA                                        & Minimize the quadratic cost                                    \\
		\multicolumn{1}{l|}{}                                   & Siami and   Jadbabaie~\cite{siami2020separation}                                   & S     & S               & \cmark                      & NA                                        & {Optimize Hankel-based   performance} \\ 
		\multicolumn{1}{l|}{}                                   & Kishida~\cite{kishida2018hands}                            & S     & S               & \cmark                      & NA                                        & Minimize the quadratic cost                                     \\
		\multicolumn{1}{l|}{\multirow{-5}{*}{\begin{tabular}[c]{@{}l@{}}Over\\ Time\end{tabular}}}                                   & Nagahara et   al.~\cite{nagahara2015maximum}               & S     & M             & \cmark                      & NA                                        & Steer the state to a target   value                        \\\hline
		\multicolumn{1}{l|}{}                                   & Pasqualetti et   al.~\cite{pasqualetti2014controllability} & S     & M             & \cmark                      & M                                       & {Steer the state to a target   value} \\
		\multicolumn{1}{l|}{}                                   &Summers and   Lygeros~\cite{summers2014optimal}            & S     & M             & \cmark                      & M                                       & Optimize   controllability  metrics                        \\
		\multicolumn{1}{l|}{}                                   & Siami et   al.~\cite{siami2020deterministic}                                       & S     & M             & \xmark                       & M                                       & Optimize   controllability metrics                         \\
		\multicolumn{1}{l|}{}                                   & Ballotta et   al.~\cite{ballotta2024pointwise}                                     & S     & M             & \xmark                       & M                                       & Optimize   controllability metrics                         \\
		\multicolumn{1}{l|}{}                                   & Joseph and   Murthy~\cite{joseph2020controllability}                               & S     & M             & \xmark                       & M                                       & Analyze   controllability conditions                       \\
		\multicolumn{1}{l|}{}       & Pasand and   Montazeri~\cite{pasand2017structural}                                 & S     & M             & \xmark                       & M                                       & Analyze   controllability conditions                       \\
		\multicolumn{1}{l|}{\multirow{-7}{*}{\begin{tabular}[c]{@{}l@{}}Over\\ Controllers\end{tabular}}}        & Jiao et   al.~\cite{jiao2022actuator}                      & S     & M             & \cmark                      & S                                        & Minimize   quadratic cost                                   \\ \hline
		\multicolumn{1}{l|}{}                                   & Ikeda and   Kashima~\cite{ikeda2022sparse}                                         & S     & M             & \xmark                       & M                                       & Optimize   controllability metrics                         \\
		\multicolumn{1}{l|}{}                                   & Olshevsky~\cite{olshevsky2020relaxation}                                           & S     & M             & \xmark                       & M                                       & Analyze linear programming (LP) relaxation                     \\
		\multicolumn{1}{l|}{}                                   & Xu et   al.~\cite{xu2018optimal}                                                   & M   & M             & \xmark                       & M                                       & Minimize the  quadratic cost                                   \\
		\multicolumn{1}{l|}{\multirow{-4}{*}{\begin{tabular}[c]{@{}l@{}}Over\\ Both Axes\end{tabular}}}             & {\bf Ours}                                                                                                     & {\bf M}   & {\bf M}             & \cmark                      & {\bf M}                                       & {\bf Minimize the quadratic cost}                                   \\ \hline
	\end{tabular}
\end{table*}

\begin{figure}[!tbp]
	\centering
	\includegraphics[width=\linewidth]{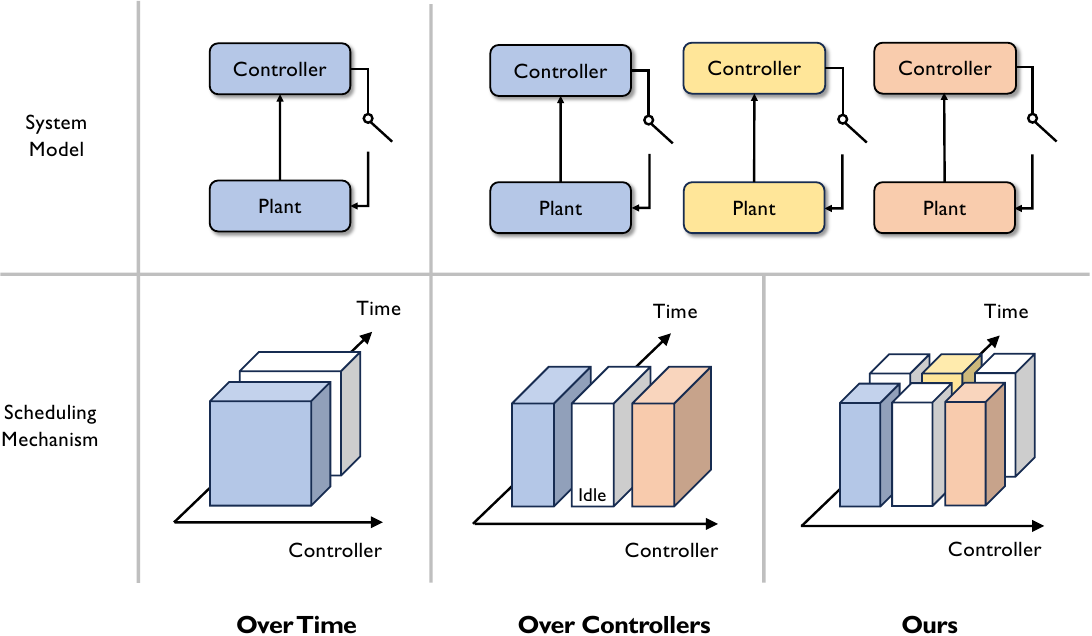}
	\caption{Overview of control data scheduling algorithms.}
	\label{fig:literature}
\end{figure}
Fig.~\ref{fig:literature} and {Table~\ref{tb:summary}} summarize the above existing works. While these works are valuable, they only focus on designing scheduling algorithms on one side, i.e., either over time or over controllers. {However, in many modern applications~\cite{ma2022smart,mozaffari2019tutorial}, controllers are deployed remotely with substantial computational resources, whereas actuators are battery-powered and operate under energy constraints. In such settings, it is desirable to reduce control effort while maintaining collision avoidance, even at the expense of increased controller-side computation. Achieving this objective requires joint scheduling over both axes. This problem is more challenging due to the stronger spatiotemporal coupling of control signals. Although several studies have considered joint scheduling~\cite{ikeda2022sparse,olshevsky2020relaxation,xu2018optimal}, their results are limited to scalar systems. In contrast, we investigate joint control data scheduling for general vector systems.} Our main contributions are:
\begin{enumerate}
\item We formulate a novel joint scheduling problem over both controllers and time. Unlike existing works~\cite{ikeda2022sparse,olshevsky2020relaxation,xu2018optimal}, our framework is applicable to general vector systems.
\item We develop an ADMM-based algorithm for the resulting problem. {Although ADMM is widely used in optimization, establishing its convergence or solution quality for general non-convex problems remains difficult~\cite{magnusson2015convergence}. Unlike most existing works that use ADMM solely as a computational tool}, our algorithm admits a clear physical interpretation ({Fig.~\ref{fig:admm}}) and guaranteed convergence ({Theorem~\ref{thm:converge1}} and {Theorem~\ref{thm:converge2}}) to a stationary point ({Theorem~\ref{thm:stationary}}).
\end{enumerate}

\emph{Notations:} The notations $\mathbb{R}$ and $\mathbb{R}^n$ represent the sets of real numbers and $n$-dimensional vectors, respectively. For a matrix $\bm A$, $\bm A'$ denotes its transpose, $\|\bm A\|_F$ is its Frobenius norm, and $\bm A_{i,\bigcdot}$ and $\bm A_{\bigcdot,j}$ correspond to its $i$-th row and $j$-th column vector, respectively. The notation $\bm A\succeq0$ means that $\bm A$ is positive semidefinite. The operators $\otimes$, $\odot$, and $\oslash$ refer to the Kronecker product, Hadamard product, and Hadamard division, respectively. The notations $\|\cdot\|_0$ and $\|\cdot\|_2$ denote the $\ell_0$ and $\ell_2$ norms, while $|\cdot|$ is the absolute value. The expression $\bm I$ is the identity matrix. Bold values such as $\bm 0,\bm 1,\dots$ represent matrices or vectors of the corresponding values with compatible dimensions. The function $\rm{vec}(\cdot)$ returns the vectorized form of a matrix, and $\rm{sign}(x)$ is the sign function, i.e., $\rm{sign}(x)=-1$ if $x<0$; $\rm{sign}(x)=1$ if $x>0$; and $\rm{sign}(x)=0$ if $x=0$.

\section{Problem Formulation}\label{chap:formulation}
\subsection{System Model}
Consider $N$ discrete linear time-invariant systems over a finite time horizon $T$ (Fig.~\ref{fig:system}). The dynamics of $i$-th system are given by
\begin{equation}\label{eq:dynamics}
{\bm x}_{k+1,i} = {\bm A}_i{\bm x}_{k,i} + \zeta_{k,i}{\bm B}_i{\bm u}_{k,i},\quad i\in\mathcal{N}, \quad k\in\mathcal{T},
\end{equation}
where $\mathcal{N}=\{1,\dots,N\}$, $\mathcal{T}=\{0,\dots,T-1\}$, ${\bm x}_{k,i}\in\mathbb{R}^{n_i}$ is the system state, ${\bm u}_{k,i}\in\mathbb{R}^{m_i}$ is the control input, $\zeta_{k,i}\in\{0,1\}$ is the binary decision variable indicating whether ${\bm u}_{k,i}$ is transmitted to the plant, i.e., $\zeta_{k,i}=1$ if ${\bm u}_{k,i}$ is transmitted to the plant and $\zeta_{k,i}=0$ otherwise. {We assume that transmissions are instantaneous and error-free. Therefore, when $\zeta_{k,i}=1$, the plant receives the control signal ${\bm u}_{k,i}$ perfectly without delay.} 
{
\begin{remark}{\rm
	This assumption is commonly adopted in the existing literature~\cite{shi2012finite,kishida2018hands,jiao2022actuator,nishida2024sparsity} and allows us to focus on the joint scheduling design considered in this paper. Incorporating delays or packet losses would introduce randomness into the system, thereby transforming the deterministic optimization problem in~\eqref{eq:pb} into a stochastic one. In such settings, analytical~\cite{zhong2024sparse} or sampling-based~\cite{mesbah2016stochastic} approaches can be employed. We leave these extensions as our future work. 	
}\end{remark}}

For $i\in\mathcal{N}$ and $k\in\mathcal{T}$, we seek control signals ${\bm u}_{k,i}$ and scheduling strategies $\zeta_{k,i}$ that minimize the finite horizon quadratic cost $J=\sum_{i=1}^N J_i$, where
\begin{equation*}
\begin{aligned}
J_i=\sum_{k=0}^{T-1}[{\bm x}_{k,i}'{\bm Q}_i{\bm x}_{k,i}+\zeta_{k,i}{\bm u}_{k,i}'{\bm R}_i{\bm u}_{k,i}]+{\bm x}'_{T,i}{\bm Q}_i{\bm x}_{T,i},
\end{aligned}
\end{equation*}
with ${\bm Q}_i\succeq0$ and ${\bm R}_i\succ0$.
\begin{figure}[!htbp]
	\centering
	\includegraphics[width=\linewidth]{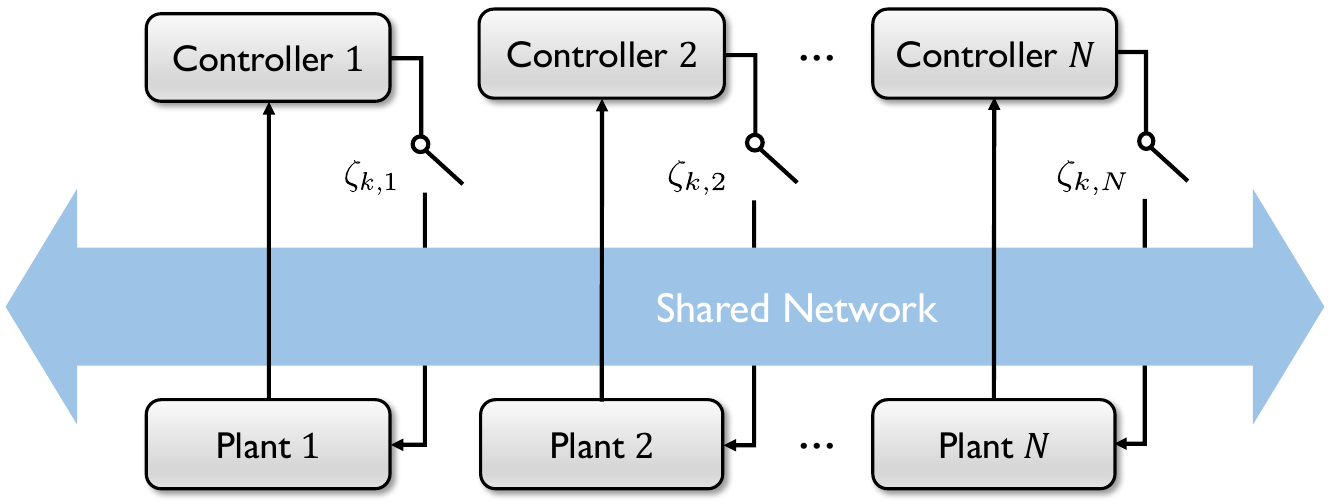}
	\caption{System model.}
	\label{fig:system}
\end{figure}

As mentioned, we have requirements on $\zeta_{k,i}$ both over controllers and over time. Specifically:
\begin{enumerate}
\item {\bf Over Controllers}: To avoid communication collisions, only a subset of controllers is allowed to transmit simultaneously, i.e., $\sum_{i=1}^N\zeta_{k,i}\leq z$ for each time $k$, where $z< N$ is a positive integer.
\item {\bf Over Time}: {For each system $i$, we enforce sparsity on $\bm{\zeta}_i\triangleq[\zeta_{0,i},\dots,\zeta_{T-1,i}]$ to reduce the actuator effort over the operational horizon.}
\end{enumerate}

Our problem of interest can then be formulated as the following optimization problem:
\begin{equation}\label{eq:pb}
\begin{aligned}
\min_{\{\zeta_{k,i}\}, \{{\bm u}_{k,i}\}} 	&\quad \sum_{i=1}^N\left\{J_i+\alpha_i {\|\bm{\zeta}_i\|_0} \right\}\\
{\rm s.t.}					&\quad \sum_{i=1}^N\zeta_{k,i}\leq z,\quad k\in\mathcal{T},\\
						&\quad \zeta_{k,i}\in\{0,1\}\text{ and \eqref{eq:dynamics}},\quad k\in\mathcal{T},\quad i\in\mathcal{N},
\end{aligned}
\end{equation}
where $\alpha_i\geq0$ is the penalty weight representing the level of emphasis on the sparsity of $\bm{\zeta}_i$.

{
\begin{remark}
	Building upon the standard linear quadratic regulator (LQR) framework~\cite{bertsekas2012dynamic}, we do not impose additional constraints on $\bm{x}_{k,i}$ and $\bm{u}_{k,i}$. Nevertheless, it is important to note that the proposed ADMM framework can be readily extended to accommodate convex constraints in linear or quadratic form. For instance, for the commonly adopted box constraints, both the closed-form updates and the convergence guarantees remain valid.
\end{remark}}
\subsection{Reformulation}\label{sec:reformulation}
Solving~\eqref{eq:pb} directly is challenging due to its nature as an {NP-hard} mixed-integer program (MIP)~\cite{karp2010reducibility}. Before we develop the optimization algorithm in the following section, let us first reformulate~\eqref{eq:pb} into an equivalent form {with respect to scheduling strategies.}

Denote $\tilde{\bm u}_{k,i}\triangleq\zeta_{k,i}{\bm u}_{k,i}$ and define the following:
\begin{align*}
\bar{\bm x}_i&\triangleq\begin{bmatrix}{\bm x}_{0,i}\\ \vdots\\ {\bm x}_{T,i}\end{bmatrix},\quad \bar{\bm u}_i\triangleq\begin{bmatrix}\tilde{\bm u}_{0,i}\\ \vdots\\ \tilde{\bm u}_{T-1,i}\end{bmatrix},
\begin{array}{ll}
\bar{\bm Q}_i\triangleq {\bm I}_{T+1}\otimes {\bm Q}_i, \vspace{0.1cm}\\ \vspace{0.1cm}
\bar{\bm R}_i\triangleq {\bm I}_T\otimes {\bm R}_i,
\end{array}
\\
\bar{\bm A}_i&\triangleq
\begin{bmatrix}
{\bm I}_{n_i}\\
{\bm A}_i\\
\vdots\\
{\bm A}_i^{T}
\end{bmatrix},\quad
\bar{\bm B}_i\triangleq
\begin{bmatrix}
\bm{0}		&\bm{0}&\cdots&\bm{0}\\
{\bm B_i}	&\bm{0}&\cdots	&\bm{0}\\
\bm{A}_i\bm{B}_i	&\bm{B}_i&~	&\bm{0}\\
\vdots		&\vdots&\ddots	&~\\
\bm{A}_i^{T-1}\bm{B}_i	&\bm{A}_i^{T-2}\bm{B}_i&\cdots	&{\bm B}_i\\
\end{bmatrix}.
\end{align*}
Then, for each system $i\in\mathcal{N}$, ~\eqref{eq:dynamics} and $J_i$ can be rewritten as
\begin{align*}
\bar{\bm x}_i	&=\bar{\bm A}_i{\bm x}_{0,i}+\bar{\bm B}_i\bar{\bm u}_i,\\
 J_i		&=\bar{\bm x}_i'\bar{\bm Q}_i\bar{\bm x}_i + \bar{\bm u}_i' \bar{\bm R}_i\bar{\bm u}_i=\bar{\bm u}_i'{\bm P}_i\bar{\bm u}_i + {\bm q}_i'\bar{\bm u}_i+{r}_i,
\end{align*}
where ${\bm P}_i=\bar{\bm B}_i'\bar{\bm Q}_i\bar{\bm B}_i+\bar{\bm R}_i$, ${\bm q}_i= 2\bar{\bm B}_i'\bar{\bm Q}_i\bar{\bm A}_i    {\bm x}_{0,i}$ and $r_i =\bm{x}_{0,i}'\bar{\bm A}_i'\bar{\bm Q}\bar{\bm A}_i{\bm x}_{0,i}$. Therefore,~\eqref{eq:pb} can be transformed into
\begin{equation}\label{eq:u}
\begin{aligned}
\min_{\{\tilde{\bm u}_{k,i}\}}	 		&\quad \sum_{i=1}^N\left[\left(\bar{\bm u}_i'{\bm P}_i\bar{\bm u}_i + {\bm q}_i'\bar{\bm u}_i\right)+\alpha_i\sum_{k=0}^{T-1}\mathbb{I}\left(\|\tilde{\bm u}_{k,i}\|_0\neq0\right)\right]\\
{\rm s.t.}					&\quad\sum_{i}^N\mathbb{I}\left(\|\tilde{\bm u}_{k,i}\|_0\neq0\right)\leq z,\quad k\in\mathcal{T},					
\end{aligned}
\end{equation}
where $\mathbb{I}(\cdot)$ denotes the indicator function, which returns $1$ when the condition is satisfied and $0$ otherwise.
{
\begin{remark}
	Problem~\eqref{eq:pb} and~\eqref{eq:u} are equivalent with respect to objective value and the resulting scheduling decisions. Specifically, when $\zeta_{k,i}=0$, the value of $\bm{u}_{k,i}$ does not have any impact on system dynamics. Therefore, any feasible solution to~\eqref{eq:pb} can be transformed into an equivalent feasible solution with the same objective value by setting $\bm{u}_{k,i}=0$ whenever $\zeta_{k,i}=0$.
\end{remark}}

\section{Methodology}\label{sec:alg}
Without loss of generality, we consider the case $m_i = 1$ in this section to simplify the presentation. The proposed algorithm can be readily generalized to arbitrary $m_i$ via straightforward matrix manipulations. 	Note that in the simulation part (Section~\ref{chap:sim}), we consider general systems with $m_i$ not necessarily equal to $1$.

Define ${\bm U}\triangleq[\bar{\bm u}_1,\dots,\bar{\bm u}_N]\in\mathbb{R}^{T\times N}$, ${\bm q}\triangleq[{\bm q}_1',\dots,{\bm q}_N']'\in\mathbb{R}^{TN}$ and ${\bm P}\triangleq{\rm blkdiag}({\bm P}_1,\dots,{\bm P}_N)\in\mathbb{R}^{TN\times TN}$. Using these definitions, we can rewrite~\eqref{eq:u} {equivalently} as: 
\begin{equation}\label{eq:U}
\begin{aligned}
\min_{\bm U}	 		&\quad {\rm vec}(\bm U)'{\bm P}{\rm vec}(\bm U) + {\bm q}'{\rm vec}(\bm U)+\sum_{i=1}^N \alpha_i\|{\bm U}_{\bigcdot,i}\|_0\\
{\rm s.t.}					&\quad\|{\bm U}_{k+1,\bigcdot}\|_0\leq z,\quad k\in\mathcal{T}.					
\end{aligned}
\end{equation}
\begin{remark}{\rm
In~\eqref{eq:U}, the requirements on $\zeta_{k,i}$ over controllers are represented by the row sparsity of $\bm U$, i.e., $\|{\bm U}_{k+1,\bigcdot}\|_0\leq z$, while the requirements over time are captured by the column sparsity, i.e., $\|{\bm U}_{\bigcdot,i}\|_0$.
}\end{remark}

To solve~\eqref{eq:U}, we first convexify the objective $\|{\bm U}_{\bigcdot,i}\|_0$ by replacing it with its $\ell_2$ surrogate. This relaxation can be further improved with reweighted $\ell_2$ minimization~\cite{chartrand2008iteratively} (introduced in Section~\ref{sec:reweighted}). Applying the $\ell_2$ relaxation leads to the following problem:
\begin{equation}\label{eq:l1_first}
\begin{aligned}
\min_{\bm U}	 		&\quad {\rm vec}(\bm U)'{\bm P}{\rm vec}(\bm U) + {\bm q}'{\rm vec}(\bm U)+\sum_{i=1}^N \alpha_i\|{\bm U}_{\bigcdot,i}\|^2_2\\
{\rm s.t.}					&\quad\|{\bm U}_{k+1,\bigcdot}\|_0\leq z,\quad k\in\mathcal{T}.					
\end{aligned}
\end{equation}
{\begin{remark}
	While $\ell_1$ relaxation is more commonly used and generally regarded as a more effective convex surrogate than plain $\ell_2$ relaxation, reweighted $\ell_2$ minimization has also been proposed as effective approximations of $\ell_p$ norms for $0<p<1$. As a result, it performs competitively with, and in some applications, outperforms (reweighted) $\ell_1$ minimization~\cite{daubechies2010iteratively}. Moreover, since $\ell_2$ relaxation enables a closed-form solution for the $\bm{U}$-minimization step given in~\eqref{eq:ls}, the computation cost is lower than $\ell_1$-based approaches. These advantages are also validated by the simulation results presented in Section~\ref{chap:sim}.
	\end{remark}}
{\begin{remark}
	 While the following proposed ADMM algorithm (see Section~\ref{sec:admm}) is efficient in practice, its convergence behavior can be sensitive to the choice of hyperparameters, and the iterates may exhibit oscillatory behavior~\cite{xu2016empirical}. By using the $\ell_2$ relaxation, we obtain a more robust algorithm with rigorous convergence guarantees (Theorem~\ref{thm:converge1} and Theorem~\ref{thm:converge2}).
\end{remark}}

Despite the usage of the $\ell_2$ surrogate, the constraint $\|{\bm U}_{k+1,\bigcdot}\|_0\leq z$ in~\eqref{eq:l1_first} remains challenging to handle. In the following, we address this using the ADMM technique~\cite{boyd2011distributed,shi2022cardinality}.

\subsection{ADMM}\label{sec:admm}
Introducing an auxiliary variable $\bm V$,~\eqref{eq:l1_first} can be written as
\begin{equation*}
\begin{aligned}
\min_{{\bm U},{\bm V}}	 		&\quad  {\rm vec}(\bm U)'{\bm P}{\rm vec}(\bm U) + {\bm q}'{\rm vec}(\bm U)+\sum_{i=1}^N \alpha_i\|{\bm U}_{\bigcdot,i}\|^2_2\\
							&\quad\quad +\sum_{k=0}^{T-1}\mathcal{I}\left({\bm V}_{k+1,\bigcdot}\right)\\
{\rm s.t.}					&\quad {\bm U}={\bm V},				
\end{aligned}
\end{equation*}
where $\mathcal{I}({\bm x})$ is an indicator function such that $\mathcal{I}({\bm x})=0$ if $\|{\bm x}\|_{0}\leq z$ and $\mathcal{I}({\bm x})=+\infty$ otherwise. Its augmented Lagrangian is then given by
\begin{equation*}
\begin{aligned}
\mathcal{L}({\bm U},{\bm V},{\bm \Lambda})	&={\rm vec}(\bm U)'{\bm P}{\rm vec}(\bm U) + {\bm q}'{\rm vec}(\bm U)\\
					&+\sum_{i=1}^N \alpha_i\|{\bm U}_{\bigcdot,i}\|^2_2+\sum_{k=0}^{T-1}\mathcal{I}\left({\bm V}_{k+1,\bigcdot}\right)\\
					&+{\rm Tr}[{\bm \Lambda}'({\bm U}-{\bm V})]+\frac{\rho}{2}\|{\bm U}-{\bm V}\|_F^2,
\end{aligned}
\end{equation*}
where ${\bm \Lambda}\in\mathbb{R}^{T\times N}$ is the Lagrangian multiplier and $\rho>0$ is the parameter for the regularization term.

The ADMM algorithm is then structured into the following three alternating steps:
\begin{equation*}
\begin{cases}
{\bm V}\text{\rm-Minimization}: {\bm V}^{t+1}=\arg\min_{\bm V}\mathcal{L}({\bm U}^t,{\bm V},{\bm \Lambda}^t);\\
{\bm U}\text{\rm-Minimization}: {\bm U}^{t+1}=\arg\min_{\bm U}\mathcal{L}({\bm U},{\bm V}^{t+1},{\bm \Lambda}^t);\\
{\bm \Lambda}\text{\rm-Update}: {\bm \Lambda}^{t+1}={\bm \Lambda}^t+\rho({\bm U}^{t+1}-{\bm V}^{t+1}),
\end{cases}
\end{equation*}
where $(\cdot)^t$ represents the $t$-th iteration of the algorithm.

Next, we derive closed-form solutions for each of the three ADMM update steps. It is worth noting that, instead of simply treating ADMM as a black-box solver, our proposed ADMM algorithm offers three advantages: (i) the updates admit closed-form expressions, (ii) the updates have physical interpretations, and (iii) convergence to a stationary point is guaranteed.

\subsubsection{Development of {\boldmath $V$}-Minimization} Denote ${\bm Y}^t={\bm U}^t+{{\bm \Lambda}^t}/{\rho}$. The following lemma gives the $\bm V$-minimization step.
\begin{lemma}\label{lemma:vmin}{\rm
The solution to ${\bm V}^{t+1}=\arg\min_{\bm V}\mathcal{L}({\bm U}^t,{\bm V},{\bm \Lambda}^t)$ is determined by
\begin{equation}\label{eq:v-min}
{\bm V}^{t+1}_{k+1,i}=
\begin{cases}
{\bm Y}_{k+1,i}^t,	&\text{if }\|{\bm Y}^t_{k+1,i}\|_2\geq \eta^t_{k+1};\\
0, 	&\text{otherwise},
\end{cases}
\end{equation}
where $\eta^t_{k+1}$ is the $z$-th largest value of $\|{\bm Y}^t_{k+1,i}\|_2$ for $i\in\mathcal{N}$.
}\end{lemma}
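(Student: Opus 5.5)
The plan is to isolate the terms of $\mathcal{L}({\bm U}^t,{\bm V},{\bm \Lambda}^t)$ that depend on ${\bm V}$ and show that, after completing the square, the problem reduces to a Euclidean projection onto the row-sparsity set. That projection splits row by row into a hard-thresholding step.

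Dropping terms independent of ${\bm V}$, the ${\bm V}$-dependent part is
\begin{equation*}
\sum_{k=0}^{T-1}\mathcal{I}({\bm V}_{k+1,\bigcdot})-{\rm Tr}[({\bm \Lambda}^t)'{\bm V}]+\frac{\rho}{2}\|{\bm U}^t-{\bm V}\|_F^2 .
\end{equation*}
Completing the square turns this into
\begin{equation*}
\frac{\rho}{2}\|{\bm V}-{\bm Y}^t\|_F^2+\sum_{k=0}^{T-1}\mathcal{I}({\bm V}_{k+1,\bigcdot}),
\end{equation*}
up to a constant, with ${\bm Y}^t={\bm U}^t+{\bm \Lambda}^t/\rho$. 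To check this, expand $\frac{\rho}{2}\|{\bm V}-{\bm U}^t-{\bm \Lambda}^t/\rho\|_F^2$: the cross term is $-{\rm Tr}[({\bm \Lambda}^t)'{\bm V}]$ plus terms not involving ${\bm V}$. So the ${\bm V}$-minimization is exactly the projection of ${\bm Y}^t$ onto the set of matrices whose rows each have at most $z$ nonzero entries.

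Both the Frobenius norm and the constraint separate across rows. The problem therefore decouples into $T$ independent subproblems, one for each $k\in\mathcal{T}$:
\begin{equation*}
\min_{{\bm v}\in\mathbb{R}^N}\ \|{\bm v}-{\bm Y}^t_{k+1,\bigcdot}\|_2^2 \quad {\rm s.t.}\quad \|{\bm v}\|_0\le z .
\end{equation*}
I solve each subproblem in two stages.

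\emph{Fix the support.} Let $S\subseteq\mathcal{N}$ with $|S|\le z$. The optimal ${\bm v}$ with support in $S$ is $v_i=Y^t_{k+1,i}$ for $i\in S$ and $v_i=0$ otherwise. Its cost is $\sum_{i\notin S}|Y^t_{k+1,i}|^2$.

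\emph{Choose the support.} Minimizing this cost over $S$ amounts to maximizing $\sum_{i\in S}|Y^t_{k+1,i}|^2$ subject to $|S|\le z$. This is achieved by taking $S$ as the indices of the $z$ largest magnitudes. Equivalently, we keep entries with $\|{\bm Y}^t_{k+1,i}\|_2\ge\eta^t_{k+1}$ and zero the rest, which is \eqref{eq:v-min}. For general $m_i$ the same argument goes through with blocks in place of scalars, and the block $\ell_2$ norm appears; this explains the $\|\cdot\|_2$ in the statement.

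The main subtlety is ties at the $z$-th largest value. If several entries equal $\eta^t_{k+1}$, the rule as written could keep more than $z$ entries. I would handle this with a remark: ties are broken arbitrarily among the minimizers, since the $\arg\min$ is then not unique, and every such choice attains the same objective value. In other words, \eqref{eq:v-min} is read as selecting one minimizer, with exactly $z$ entries retained.

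Everything else in the argument is routine algebra.
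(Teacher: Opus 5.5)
Your proposal is correct and follows the paper's route: rewrite the ${\bm V}$-step as the Frobenius projection of ${\bm Y}^t$ onto the row-wise sparsity set, split it into $T$ row problems, and keep the $z$ entries of largest norm in each row. Two points are handled better than in the paper. Your fixed-support step is cleaner than the paper's displayed bound, which is written with $\leq$ where the correct relation is $\geq$, with equality when ${\bm V}_{k+1,i}={\bm Y}^t_{k+1,i}$ on the support. Your tie-breaking remark also covers a case the paper ignores: with ties at a positive $\eta^t_{k+1}$, the stated rule would keep more than $z$ entries.
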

\begin{proof}
The $\bm V$-minimization step is equivalent to
\begin{equation*}
{\bm V}^{t+1}	=\arg\min_{\bm V}\left\{\sum_{k=0}^{T-1}\mathcal{I}\left({\bm V}_{k+1,\bigcdot}\right)+\frac{\rho}{2}\|{\bm Y}^t-{\bm V}\|_F^2\right\},
\end{equation*}
which can be again equivalently transformed into the following optimization problem:
\begin{equation}\label{eq:x-min}
\begin{aligned}
\min_{\bm V}	 		&\quad \|{\bm Y}^t-{\bm V}\|_F^2\\
{\rm s.t.}					&\quad \|{\bm V}_{k+1,\bigcdot}\|_0\leq z,\quad k\in\mathcal{T}.
\end{aligned}
\end{equation}
By considering each row of $\bm V$ separately,~\eqref{eq:x-min} can be decomposed into $T$ optimization problems. For $k\in\mathcal{T}$, we then have
\begin{equation}\label{eq:decompose}
\begin{aligned}
\min_{\bm V}	 		&\quad \|{\bm Y}_{k+1,\bigcdot}^t-{\bm V}_{k+1,\bigcdot}\|_2^2\\
{\rm s.t.}					&\quad \|{\bm V}_{k+1,\bigcdot}\|_0\leq z.
\end{aligned}
\end{equation}

Let $\Upsilon_{k+1}$ be the index set of nonzero elements in ${\bm V}_{k+1,\bigcdot}$, and $\Upsilon^c_{k+1}=\mathcal{N}/\Upsilon_{k+1}$ be that of zero elements. Obviously, we require $|\Upsilon_{k+1}|\leq z$. The objective of~\eqref{eq:decompose} is then given by
\begin{equation*}
\begin{aligned}
&\|{\bm Y}_{k+1,\bigcdot}^t-{\bm V}_{k+1,\bigcdot}\|_2^2	\\
=&\sum_{i\in\Upsilon_{k+1}}\|{\bm Y}_{k+1,i}^t-{\bm V}_{k+1,i}\|_2^2+\sum_{i\in\Upsilon^c_{k+1}}\|{\bm Y}_{k+1,i}^t-{\bm V}_{k+1,i}\|_2^2\\
									\leq&\sum_{i\in\Upsilon^c_{k+1}}\|{\bm Y}_{k+1,i}^t\|_2^2.
\end{aligned}
\end{equation*}
To minimize the objective function, $\Upsilon_{k+1}^c$ should include the $N-z$ smallest absolute values of $\|{\bm Y}_{k+1,i}^t\|_2$ for $i\in\mathcal{N}$. Thus, the solution is determined by~\eqref{eq:v-min}.
\end{proof}
\subsubsection{Development of {\boldmath $U$}-Minimization}
The $\bm U$-minimization step is equivalent to
\begin{equation}\label{eq:l1}
\begin{aligned}
{\bm U}^{t+1}&=\arg\min_{\bm U}{\rm vec}(\bm U)'{\bm P}{\rm vec}(\bm U) + {\bm q}'{\rm vec}(\bm U)\\
					&+\sum_{i=1}^N \alpha_i\|{\bm U}_{\bigcdot,i}\|^2_2 +{\rm Tr}\left[({\bm \Lambda}^t)'{\bm U}\right]+\frac{\rho}{2}\|{\bm U}-{\bm V}^{t+1}\|_F^2.
\end{aligned}
\end{equation}

Let $\bar{\bm u}^{t+1}_i$, ${\bm v}_i^{t+1}$ and $\bm\lambda_i^{t}$ denote the $i$-th column of ${\bm U}^{t+1}$, ${\bm V}^{t+1}$ and $\bm\Lambda^t$, respectively.  By considering each column $i$ separately, we decompose~\eqref{eq:l1} into $N$ problems, i.e.,
\begin{equation*}
\begin{aligned}
\bar{\bm u}^{t+1}_i&=\arg\min_{\bar{\bm u}_i}\bar{\bm u}_i'{\bm P}_i\bar{\bm u}_i + {\bm q}_i'\bar{\bm u}_i+\alpha_i\|\bar{\bm u}_i\|^2_2\\
					&\quad +({\bm \lambda}_i^t)'\bar{\bm u}_i+\frac{\rho}{2}\|\bar{\bm u}_i-{\bm v}_i^{t+1}\|_2^2\\
					&=\arg\min_{\bar{\bm u}_i}\frac{1}{2}\bar{\bm u}_i'{\bm \Phi}_i\bar{\bm u}_i-(\rho{\bm v}^{t+1}_i-{\bm q}_i-{\bm \lambda}_i^t)'\bar{\bm u}_i,
\end{aligned}
\end{equation*}
where $i\in\mathcal{N}$ and ${\bm \Phi}_i=2{\bm P}_i+(\rho+2\alpha_i){\bm I}$. Since ${\bm \Phi}_i\succ 0$, we obtain
\begin{equation}\label{eq:ls}
\bar{\bm u}^{t+1}_i	={\bm \Phi}_i^{-1}(\rho{\bm v}^{t+1}_i-{\bm q}_i-{\bm \lambda}_i^t).
\end{equation}

\subsubsection{Separation Principle} While ADMM is commonly used for non-convex optimization problems, it has a clear physical interpretation in our specific problem. Recall that our goal is to schedule the control data both over controllers and over time. The proposed ADMM algorithm effectively ``separates'' this co-design process into two iterative steps: one over controllers and the other over time. Specifically:
\begin{itemize}
	\item ${\bm V}$-Minimization:  Recall~\eqref{eq:x-min}. Starting from ${\bm U}^t$, which may not be feasible to the original problem~\eqref{eq:l1_first}, this step generates a feasible solution ${\bm V}^{t+1}$ while keeping the distance between ${\bm U}^t$ and ${\bm V}^{t+1}$, i.e., $\|\bm U^t-\bm V^{t+1}\|_F$, small. Thus, this step can be regarded as the {\it scheduling-over-controllers} step.
	\item ${\bm U}$-Minimization: Recall~\eqref{eq:l1}. This step yields a column-sparse-promoting ${\bm U}^{t+1}$ while maintaining a small distance between ${\bm U}^{t+1}$ and ${\bm V}^{t+1}$. Thus, this step can be viewed as the {\it scheduling-over-time} step.
\end{itemize}
Fig.~\ref{fig:admm} provides a visual illustration of the iterative steps of the ADMM algorithm.

\begin{figure}[!htbp]
	\centering
	\includegraphics[width=0.8\linewidth]{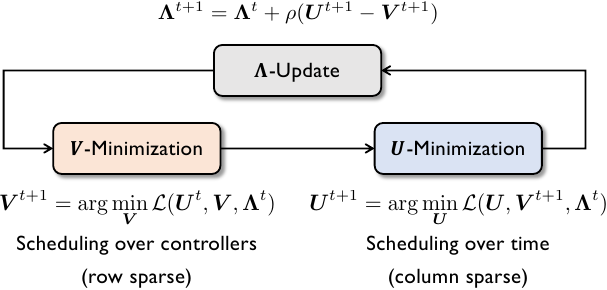}
	\caption{Illustration of the iterative steps in ADMM.}
	\label{fig:admm}
\end{figure}

\subsubsection{Theoretical Analysis} This part provides a theoretical analysis of the proposed ADMM algorithm.

Let $\bar{\omega}_i$ and $\underline{\omega}_i$ represent the largest and smallest eigenvalues of $\bm{P}_i+\alpha_i\bm{I}$, respectively. Define $\bar{\omega}\triangleq\max_i\bar{\omega}_i$ and $\underline{\omega}\triangleq\min_i\underline{\omega}_i$.
\begin{lemma}\label{lemma:iteration}{\rm
If $\rho>{4\bar{\omega}^2}/{\underline{\omega}}$, then for each iteration $t$, there exists a constant $\kappa>0$ such that
\begin{equation*}
\mathcal{L}({\bm U}^{t+1},{\bm V}^{t+1},{\bm \Lambda}^{t+1})-\mathcal{L}({\bm U}^t,{\bm V}^t,{\bm \Lambda}^t)\leq-\kappa\|{\bm U}^{t+1}-{\bm U}^t\|_F^2.
\end{equation*}
}\end{lemma}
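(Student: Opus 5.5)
The plan is to follow the standard nonconvex ADMM descent argument (in the spirit of Hong--Luo--Razaviyayn and Wang--Yin--Zeng). We split the change of $\mathcal{L}$ over one iteration into three parts, matching the order of the updates $\bm V\to\bm U\to\bm\Lambda$:
\begin{align*}
&\mathcal{L}(\bm U^{t+1},\bm V^{t+1},\bm\Lambda^{t+1})-\mathcal{L}(\bm U^t,\bm V^t,\bm\Lambda^t)\\
&=[\mathcal{L}(\bm U^{t},\bm V^{t+1},\bm\Lambda^{t})-\mathcal{L}(\bm U^t,\bm V^t,\bm\Lambda^t)]\\
&\quad+[\mathcal{L}(\bm U^{t+1},\bm V^{t+1},\bm\Lambda^{t})-\mathcal{L}(\bm U^{t},\bm V^{t+1},\bm\Lambda^{t})]\\
&\quad+[\mathcal{L}(\bm U^{t+1},\bm V^{t+1},\bm\Lambda^{t+1})-\mathcal{L}(\bm U^{t+1},\bm V^{t+1},\bm\Lambda^{t})].
\end{align*}

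\textbf{The $\bm V$-term.} By Lemma~\ref{lemma:vmin}, $\bm V^{t+1}$ is a global minimizer of $\mathcal{L}(\bm U^t,\cdot,\bm\Lambda^t)$. The iterate $\bm V^t$ is feasible, since it was produced by the same projection. Hence this term is $\le 0$. This holds even though the indicator is nonconvex.

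\textbf{The $\bm U$-term.} The function $\bm U\mapsto\mathcal{L}(\bm U,\bm V^{t+1},\bm\Lambda^t)$ separates over columns. Column $i$ has Hessian $\bm\Phi_i=2(\bm P_i+\alpha_i\bm I)+\rho\bm I$, so it is strongly convex with modulus at least $2\underline{\omega}+\rho$. Because $\bm U^{t+1}$ is the exact minimizer,
\[
\mathcal{L}(\bm U^{t+1},\bm V^{t+1},\bm\Lambda^t)-\mathcal{L}(\bm U^t,\bm V^{t+1},\bm\Lambda^t)\le -\tfrac{\rho+2\underline{\omega}}{2}\|\bm U^{t+1}-\bm U^t\|_F^2 .
\]

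\textbf{The $\bm\Lambda$-term.} This term equals $\mathrm{Tr}[(\bm\Lambda^{t+1}-\bm\Lambda^t)'(\bm U^{t+1}-\bm V^{t+1})]=\frac1\rho\|\bm\Lambda^{t+1}-\bm\Lambda^t\|_F^2$, which is nonnegative. It must be controlled by $\|\bm U^{t+1}-\bm U^t\|_F^2$, and this is the key step. The first-order optimality condition of~\eqref{eq:ls}, column by column, reads
\[
2(\bm P_i+\alpha_i\bm I)\bar{\bm u}_i^{t+1}+\bm q_i+\bm\lambda_i^t+\rho(\bar{\bm u}_i^{t+1}-\bm v_i^{t+1})=\bm 0 .
\]
Combining it with the multiplier update $\bm\lambda_i^{t+1}=\bm\lambda_i^t+\rho(\bar{\bm u}_i^{t+1}-\bm v_i^{t+1})$ gives the identity
\[
\bm\lambda_i^{t+1}=-2(\bm P_i+\alpha_i\bm I)\bar{\bm u}_i^{t+1}-\bm q_i .
\]
This identity holds at every iteration $t\ge0$, and it holds at $t=0$ as well provided $\bm\Lambda^0$ is initialized consistently or we start counting from $t\ge1$. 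Subtracting consecutive iterates yields $\|\bm\lambda_i^{t+1}-\bm\lambda_i^t\|_2\le 2\bar{\omega}_i\|\bar{\bm u}_i^{t+1}-\bar{\bm u}_i^t\|_2$, and therefore
\[
\|\bm\Lambda^{t+1}-\bm\Lambda^t\|_F^2\le 4\bar\omega^2\|\bm U^{t+1}-\bm U^t\|_F^2 .
\]
The point is that the dual variable is entirely determined by the smooth $\bm U$-block, so the nonconvex $\bm V$-step never enters the dual bound.

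\textbf{Combining.} Summing the three bounds gives
\[
\mathcal{L}^{t+1}-\mathcal{L}^t\le-\Big(\tfrac{\rho}{2}+\underline{\omega}-\tfrac{4\bar\omega^2}{\rho}\Big)\|\bm U^{t+1}-\bm U^t\|_F^2 .
\]
Set $\kappa=\tfrac{\rho}{2}+\underline{\omega}-\tfrac{4\bar\omega^2}{\rho}$. When $\rho>4\bar\omega^2/\underline{\omega}$ we have $4\bar\omega^2/\rho<\underline{\omega}$, so $\kappa>\rho/2>0$, which proves the statement.

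The main obstacle is the dual bound: writing down the exact optimality identity and checking that it is valid from the first iteration, so that the initialization of $\bm\Lambda^0$ does not break the argument. The remaining steps are routine strong-convexity bookkeeping.
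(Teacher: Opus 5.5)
Your proposal is correct and matches the paper's proof in Appendix~\ref{apx:iteration} essentially step for step: the same three-term split, the global minimality of $\bm V^{t+1}$, strong convexity of the $\bm U$-block, and the identity $\bm\lambda_i^{t}=-\nabla\mathcal{F}_i(\bar{\bm u}_i^{t})$ used to bound $\frac{1}{\rho}\|\bm\Lambda^{t+1}-\bm\Lambda^t\|_F^2\le\frac{4\bar\omega^2}{\rho}\|\bm U^{t+1}-\bm U^t\|_F^2$. Two refinements go beyond the paper: your strong-convexity constant $\underline{\omega}+\rho/2$ is sharper than the paper's $\underline{\omega}$, giving a larger $\kappa$, and your $t=0$ caveat is in fact satisfied by Algorithm~\ref{alg:admm}'s initialization, since $\bm\Lambda^0=\bm 0$ and $\bm U^0$ is the unconstrained minimizer, so $\nabla\mathcal{F}_i(\bar{\bm u}_i^0)=\bm 0$.
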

\begin{proof}
See Appendix~\ref{apx:iteration}.
\end{proof}
\begin{lemma}\label{lemma:lbound}{\rm
If $\rho>2{\bar{\omega}}$, then the sequence $\{\mathcal{L}({\bm U}^t,{\bm V}^t,{\bm \Lambda}^t)\}_{t=0}^\infty$ is lower bounded.
}\end{lemma}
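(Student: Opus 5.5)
The plan is the standard argument for nonconvex ADMM: use the optimality condition of the $\bm U$-minimization to eliminate $\bm\Lambda^t$, and then recognize the augmented Lagrangian as the smooth part evaluated at a nearby point $\bm V^t$ plus a nonnegative quadratic remainder. Write $f(\bm U)\triangleq{\rm vec}(\bm U)'\bm P{\rm vec}(\bm U)+\bm q'{\rm vec}(\bm U)+\sum_i\alpha_i\|\bm U_{\bigcdot,i}\|_2^2$. Column-wise, $f(\bm U)=\sum_i f_i(\bar{\bm u}_i)$ with $f_i(\bar{\bm u}_i)=\bar{\bm u}_i'(\bm P_i+\alpha_i\bm I)\bar{\bm u}_i+\bm q_i'\bar{\bm u}_i$. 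This function is strongly convex, and its gradient $\nabla f_i(\bar{\bm u}_i)=2(\bm P_i+\alpha_i\bm I)\bar{\bm u}_i+\bm q_i$ is Lipschitz with constant $L=2\bar\omega$. Moreover, $f$ is bounded below by $f^\star\triangleq\min f>-\infty$ because $\bm P_i+\alpha_i\bm I\succ0$.

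For $t\ge1$, the first-order optimality condition of \eqref{eq:l1} together with the $\bm\Lambda$-update gives $\nabla f(\bm U^{t})+\bm\Lambda^{t-1}+\rho(\bm U^t-\bm V^t)=\bm 0$, that is,
\begin{equation*}
\bm\Lambda^t=-\nabla f(\bm U^t).
\end{equation*}
The iterate $\bm V^t$ is produced by Lemma~\ref{lemma:vmin}, so it satisfies the row-sparsity constraint and $\sum_k\mathcal{I}(\bm V^t_{k+1,\bigcdot})=0$. Substituting this into the Lagrangian yields
\begin{equation*}
\mathcal{L}(\bm U^t,\bm V^t,\bm\Lambda^t)=f(\bm U^t)+{\rm Tr}[\nabla f(\bm U^t)'(\bm V^t-\bm U^t)]+\frac{\rho}{2}\|\bm U^t-\bm V^t\|_F^2 .
\end{equation*}

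The descent lemma for an $L$-smooth function gives
\begin{equation*}
f(\bm V^t)\le f(\bm U^t)+{\rm Tr}[\nabla f(\bm U^t)'(\bm V^t-\bm U^t)]+\frac{L}{2}\|\bm V^t-\bm U^t\|_F^2 .
\end{equation*}
Combining this with the expression above gives
\begin{equation*}
\mathcal{L}(\bm U^t,\bm V^t,\bm\Lambda^t)\ge f(\bm V^t)+\frac{\rho-2\bar\omega}{2}\|\bm U^t-\bm V^t\|_F^2\ge f^\star
\end{equation*}
whenever $\rho>2\bar\omega$. This is exactly the hypothesis of the statement, and it gives the lower bound for all $t\ge1$. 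The single initial term $t=0$ is finite provided the initialization is feasible, so it does not affect boundedness of the sequence.

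The point needing most care is the multiplier identity $\bm\Lambda^t=-\nabla f(\bm U^t)$. It depends on the update order used in the paper ($\bm V$, then $\bm U$, then $\bm\Lambda$), so the indices must be checked carefully, and it holds only from $t=1$ onward. Once this is in place, the rest is routine.
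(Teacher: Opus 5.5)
Your proposal is correct and follows essentially the same route as the paper. Both arguments eliminate the multiplier via $\bm\Lambda^t=-\nabla f(\bm U^t)$ and apply the quadratic upper bound $f(\bm V^t)-f(\bm U^t)\le\nabla f(\bm U^t)'(\bm V^t-\bm U^t)+\bar\omega\|\bm V^t-\bm U^t\|_F^2$ (done column-wise with $\bar\omega_i$ in the paper). Both then conclude $\mathcal{L}(\bm U^t,\bm V^t,\bm\Lambda^t)\ge f(\bm V^t)+(\tfrac{\rho}{2}-\bar\omega)\|\bm U^t-\bm V^t\|_F^2\ge f^\star$. Your extra care about $t=0$ is harmless but not needed: with the initialization of Algorithm~\ref{alg:admm} ($\bm\Lambda^0=\bm 0$, and $\bm U^0$ the unconstrained minimizer, so $\nabla f(\bm U^0)=\bm 0$), the multiplier identity already holds at $t=0$.
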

\begin{proof}
See Appendix~\ref{apx:lbound}.
\end{proof}
\begin{thm}\label{thm:converge1}{\rm
If $\rho>\max\{{4\bar{\omega}}^2/{\underline{\omega}},2{\bar{\omega}}\}$, then the sequence $\{\mathcal{L}({\bm U}^t,{\bm V}^t,{\bm \Lambda}^t)\}_{t=0}^\infty$ converges as $t\to\infty$.
}\end{thm}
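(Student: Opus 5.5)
The plan is to combine Lemma~\ref{lemma:iteration} and Lemma~\ref{lemma:lbound} through the monotone convergence theorem. Set $\mathcal{L}^t\triangleq\mathcal{L}({\bm U}^t,{\bm V}^t,{\bm \Lambda}^t)$. The hypothesis $\rho>\max\{4\bar{\omega}^2/\underline{\omega},2\bar{\omega}\}$ is chosen so that both lemmas apply at once. This is the only place the two separate thresholds enter.

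First, I will invoke Lemma~\ref{lemma:iteration}, which is valid since $\rho>4\bar{\omega}^2/\underline{\omega}$. It gives
\begin{equation*}
\mathcal{L}^{t+1}-\mathcal{L}^t\leq-\kappa\|{\bm U}^{t+1}-{\bm U}^t\|_F^2\leq 0
\end{equation*}
for every $t$, with $\kappa>0$. Hence $\{\mathcal{L}^t\}_{t\geq 0}$ is monotonically nonincreasing.

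Second, I will invoke Lemma~\ref{lemma:lbound}, which is valid since $\rho>2\bar{\omega}$. It says that the same sequence is bounded below by some finite constant $\underline{L}$. A nonincreasing real sequence that is bounded below converges to its infimum, so $\mathcal{L}^t\to\mathcal{L}^\ast\geq\underline{L}$ as $t\to\infty$.

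The one point that needs care is checking that the $\mathcal{L}^t$ are finite real numbers, rather than $+\infty$ because of the indicator terms $\mathcal{I}({\bm V}_{k+1,\bigcdot})$. For $t\geq1$, Lemma~\ref{lemma:vmin} guarantees that each row of ${\bm V}^t$ has at most $z$ nonzero entries, so every indicator term vanishes and $\mathcal{L}^t$ is finite. If the initialization ${\bm V}^0$ is infeasible, I will simply start the sequence at $t=1$ or assume a feasible ${\bm V}^0$; this does not affect convergence.

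As a byproduct, telescoping the descent inequality gives
\begin{equation*}
\kappa\sum_{t}\|{\bm U}^{t+1}-{\bm U}^t\|_F^2\leq\mathcal{L}^{0}-\mathcal{L}^\ast<\infty.
\end{equation*}
This summability is likely what is needed later for Theorem~\ref{thm:converge2}.
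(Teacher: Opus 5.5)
Your proposal is correct and follows the paper's proof: Lemma~\ref{lemma:iteration} gives monotone descent, Lemma~\ref{lemma:lbound} gives a lower bound, and monotone convergence finishes the argument. Your remark that $\mathcal{L}^t$ is finite (since ${\bm V}^t$ is feasible for $t\geq1$) is a small piece of care the paper leaves implicit.
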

\begin{proof}
By Lemma~\ref{lemma:iteration} and Lemma~\ref{lemma:lbound}, $\mathcal{L}({\bm U}^t,{\bm V}^t,{\bm \Lambda}^t)$ is monotonically decreasing and lower bounded, and therefore convergent as $t\to\infty$.
\end{proof}
\begin{thm}\label{thm:converge2}{\rm
If $\rho>\max\{{4\bar{\omega}}^2/{\underline{\omega}},2{\bar{\omega}}\}$, then we obtain $\|\bm U^{t+1}-\bm U^t\|_F^2\to0$, $\|\bm V^{t+1}-\bm V^t\|_F^2\to0$ and $\|\bm \Lambda^{t+1}-\bm \Lambda^t\|_F^2\to0$ as $t\to\infty$.
}\end{thm}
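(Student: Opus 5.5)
The argument goes in three steps: the $\bm U$-increments via a telescoping sum, the $\bm\Lambda$-increments via the optimality condition of the $\bm U$-minimization, and the $\bm V$-increments via the dual update.

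\emph{$\bm U$-increments.} Sum the descent inequality of Lemma~\ref{lemma:iteration} over $t=0,\dots,T_0-1$. The left-hand side telescopes, which gives
\begin{equation*}
\kappa\sum_{t=0}^{T_0-1}\|\bm U^{t+1}-\bm U^t\|_F^2\leq \mathcal{L}(\bm U^0,\bm V^0,\bm\Lambda^0)-\mathcal{L}(\bm U^{T_0},\bm V^{T_0},\bm\Lambda^{T_0}).
\end{equation*}
By Lemma~\ref{lemma:lbound} the right-hand side is bounded above uniformly in $T_0$; Theorem~\ref{thm:converge1} also gives this. Hence the series $\sum_t\|\bm U^{t+1}-\bm U^t\|_F^2$ is finite, and $\|\bm U^{t+1}-\bm U^t\|_F^2\to0$. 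The first claim needs $\mathcal{L}(\bm U^0,\bm V^0,\bm\Lambda^0)<\infty$. This holds if the initialization is feasible, i.e., $\bm V^0$ satisfies the row-sparsity constraint. Otherwise we start the sum at $t=1$, since $\bm V^1$ is always feasible by Lemma~\ref{lemma:vmin}.

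\emph{$\bm\Lambda$-increments.} The first-order optimality condition of~\eqref{eq:ls}, combined with the $\bm\Lambda$-update, should express the multiplier through $\bm U$. Column-wise, $\bm\Phi_i\bar{\bm u}_i^{t+1}=\rho\bm v_i^{t+1}-\bm q_i-\bm\lambda_i^t$ and $\bm\lambda_i^{t+1}=\bm\lambda_i^t+\rho(\bar{\bm u}_i^{t+1}-\bm v_i^{t+1})$. Together these give
\begin{equation*}
\bm\lambda_i^{t+1}=-2(\bm P_i+\alpha_i\bm I)\bar{\bm u}_i^{t+1}-\bm q_i .
\end{equation*}
This identity is presumably already derived inside the proof of Lemma~\ref{lemma:iteration} in Appendix~\ref{apx:iteration}, and I would reuse it. It yields
\begin{equation*}
\|\bm\Lambda^{t+1}-\bm\Lambda^t\|_F\leq 2\bar\omega\|\bm U^{t+1}-\bm U^t\|_F,
\end{equation*}
valid for $t\geq1$. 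Therefore $\|\bm\Lambda^{t+1}-\bm\Lambda^t\|_F^2\to0$.

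\emph{$\bm V$-increments.} From the dual update, $\bm V^{t+1}=\bm U^{t+1}-(\bm\Lambda^{t+1}-\bm\Lambda^t)/\rho$. Subtracting the same expression at $t$ gives
\begin{equation*}
\|\bm V^{t+1}-\bm V^t\|_F\leq\|\bm U^{t+1}-\bm U^t\|_F+\tfrac1\rho\|\bm\Lambda^{t+1}-\bm\Lambda^t\|_F+\tfrac1\rho\|\bm\Lambda^{t}-\bm\Lambda^{t-1}\|_F .
\end{equation*}
All three terms vanish by the previous two steps, so $\|\bm V^{t+1}-\bm V^t\|_F^2\to0$.

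The only delicate point is the first step: the telescoped bound must be finite, which is why the possibly infinite initial value of $\mathcal{L}$ has to be avoided by starting from a feasible iterate. The rest is algebra from the closed-form updates.
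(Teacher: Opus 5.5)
Your proposal is correct and follows the paper's proof essentially step for step. The paper also obtains $\|\bm U^{t+1}-\bm U^t\|_F\to0$ from Lemma~\ref{lemma:iteration} together with the convergence of $\mathcal{L}$ (Theorem~\ref{thm:converge1}, equivalent to your telescoping bound), controls the $\bm\Lambda$-increments through the identity $\bm\lambda_i^t=-\nabla\mathcal{F}_i(\bar{\bm u}_i^t)$ derived in Appendix~\ref{apx:iteration}, and handles $\bm V$ via the dual update and the triangle inequality; your care about starting from a feasible iterate and your constant $2\bar\omega$ (which squares to the correct $4\bar\omega^2$, where the appendix writes $4\bar\omega$) are minor refinements, not a different route.
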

\begin{proof}
See Appendix~\ref{apx:converge2}.
\end{proof}
Theorems~\ref{thm:converge1} and~\ref{thm:converge2} establish the convergence of our proposed ADMM algorithm. Although ADMM is widely applied to non-convex optimization problems, deriving rigorous convergence guarantees for the general non-convex setting remains challenging~\cite{magnusson2015convergence}. {Consequently, such convergence results are extremely limited in the existing literature.}

Having secured convergence, we now examine the ``quality'' of the solution. { Since~\eqref{eq:l1_first} is non-convex, certifying global optimality is generally intractable. Instead, it is standard to characterize solutions through first-order necessary optimality conditions, i.e., the Karush-Kuhn-Tucker (KKT) conditions.} However, due to the non-smooth cardinality constraints in~\eqref{eq:l1_first}, the standard KKT framework is not directly applicable. To overcome this, we adopt the notion of $L$-stationarity, {which generalizes first-order stationarity to cardinality-constrained problems.}
\begin{definition}[$L$-stationarity~\cite{beck2013sparsity}]\label{def:L}{\rm
For the problem $\{\min f(\bm x):\bm x\in\mathcal{X}\}$ with $f(\cdot):\mathbb{R}^n\to\mathbb{R}$ continuously differentiable, a point $\bm x$ is an $L$-stationary point if
\begin{equation*}
\bm x\in\mathcal{P}_{\mathcal{X}}\left(\bm x-\frac{1}{L}\nabla f(\bm x)\right),
\end{equation*}
where $\nabla f(\bm x)$ is the gradient of $f(\bm x)$ and $\mathcal{P}_{\mathcal{X}}(\cdot)$ is the orthogonal projection onto $\mathcal{X}$.
}\end{definition}
\begin{remark}{\rm
Definition~\ref{def:L} extends the standard notion of a stationary point. In the special case where $\mathcal{X}$ is closed and convex, $\bm x$ is a stationary point if and only if the equality $\bm x=\mathcal{P}_{\mathcal{X}}(\bm x-\frac{1}{L}\nabla f(\bm x))$ holds for any $L>0$~\cite{bertsekas1997nonlinear}.
}\end{remark}
\begin{prop}\label{prop:necessary}{\rm
If $\bm U^\star$ is an optimal solution to~\eqref{eq:l1_first}, then ${\rm vec}(\bm U^\star)$ is an $L$-stationary point to~\eqref{eq:l1_first} for any $L>2\bar{\omega}$.
}\end{prop}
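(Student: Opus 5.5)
Let $f(\bm U)={\rm vec}(\bm U)'\bm P{\rm vec}(\bm U)+\bm q'{\rm vec}(\bm U)+\sum_i\alpha_i\|\bm U_{\bigcdot,i}\|_2^2$ and let $\mathcal{X}$ be the row-sparse feasible set of~\eqref{eq:l1_first}. The plan follows the standard argument of Beck and Eldar~\cite{beck2013sparsity} for sparsity-constrained problems. The first step is to write $f$ column by column as $\sum_i[\bar{\bm u}_i'(\bm P_i+\alpha_i\bm I)\bar{\bm u}_i+\bm q_i'\bar{\bm u}_i]$. Its Hessian is then ${\rm blkdiag}(2(\bm P_i+\alpha_i\bm I))\preceq 2\bar\omega\bm I$, so $\nabla f$ is Lipschitz with constant $2\bar\omega$. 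Since $f$ is quadratic, this gives the exact descent lemma
\begin{equation*}
f(\bm W)\le f(\bm U)+\langle\nabla f(\bm U),\bm W-\bm U\rangle+\bar\omega\|\bm W-\bm U\|_F^2
\end{equation*}
for all $\bm U,\bm W$.

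Fix $L>2\bar\omega$ and an optimal $\bm U^\star$. Take any $\bm W\in\mathcal{P}_{\mathcal X}(\bm U^\star-\frac1L\nabla f(\bm U^\star))$; the projection exists because $\mathcal X$ is closed, and by Lemma~\ref{lemma:vmin} it is obtained by rowwise hard thresholding. Completing the square shows that $\bm W$ minimizes $h(\bm W)\triangleq\langle\nabla f(\bm U^\star),\bm W-\bm U^\star\rangle+\frac L2\|\bm W-\bm U^\star\|_F^2$ over $\mathcal X$. Because $\bm U^\star\in\mathcal X$ and $h(\bm U^\star)=0$, we get $h(\bm W)\le0$. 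Combining this with the descent lemma gives
\begin{equation*}
f(\bm W)\le f(\bm U^\star)+h(\bm W)-\Big(\tfrac L2-\bar\omega\Big)\|\bm W-\bm U^\star\|_F^2\le f(\bm U^\star)-\Big(\tfrac L2-\bar\omega\Big)\|\bm W-\bm U^\star\|_F^2 .
\end{equation*}
Since $\bm W$ is feasible and $\bm U^\star$ is optimal, $f(\bm W)\ge f(\bm U^\star)$. The inequality then forces $\|\bm W-\bm U^\star\|_F=0$, because $L>2\bar\omega$ makes the coefficient $\tfrac L2-\bar\omega$ strictly positive.

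It follows that every element of the projection set equals $\bm U^\star$; in particular the set is the singleton $\{\bm U^\star\}$. This yields ${\rm vec}(\bm U^\star)\in\mathcal P_{\mathcal X}({\rm vec}(\bm U^\star)-\frac1L\nabla f)$, which is exactly Definition~\ref{def:L}.

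The step needing the most care is the Lipschitz constant. The objective contains ${\rm vec}(\bm U)'\bm P{\rm vec}(\bm U)$ without a factor $\frac12$, so the curvature bound is $2\bar\omega$ rather than $\bar\omega$, and this is exactly why the statement requires $L>2\bar\omega$. The other point to check is that the projection onto the nonconvex set $\mathcal X$ is well defined, possibly set-valued, and is characterized by minimizing $h$. Both follow from the rowwise decomposition used in the proof of Lemma~\ref{lemma:vmin}.
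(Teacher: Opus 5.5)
Your proof is correct and is essentially the argument the paper relies on. The paper omits the proof and cites \cite{beck2013sparsity}, Theorem~2.2, whose proof is exactly this descent-lemma bound $f(\bm W)\le f(\bm U^\star)-(\tfrac L2-\bar\omega)\|\bm W-\bm U^\star\|_F^2$ with gradient Lipschitz constant $2\bar\omega$; writing it out has the small benefit of showing that only closedness of the feasible set is used, so it covers the row-wise constraint $\|\bm U_{k+1,\bigcdot}\|_0\le z$ and not just the single global cardinality constraint that Beck and Eldar treat.
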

\begin{proof}
The proof follows directly from~\cite[Theorem 2.2]{beck2013sparsity} and is therefore omitted.
\end{proof}
{\begin{remark}
	Proposition~\ref{prop:necessary} shows that $L$-stationarity with $L>2\bar{\omega}$ is a necessary condition for optimality. It is important to note that $\bar{\omega}$ is not a consequence of our algorithm design. Instead, it is introduced as part of the definition of $L$-stationarity in~[31], and the requirement $L>2\bar{\omega}$ is an intrinsic characteristic of the problem formulation. 
\end{remark}}
In the following, we use the superscript $(\cdot)^{\infty}$ to denote the corresponding limit values generated by ADMM.

\begin{thm}\label{thm:stationary}{\rm
{For any $L\geq\rho$}, ${\rm vec}(\bm U^\infty)$ is a $L$-stationary solution to~\eqref{eq:l1_first}.
}\end{thm}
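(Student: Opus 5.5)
The plan is to pass to the limit in the optimality conditions of the two ADMM subproblems. This shows that $\bm U^\infty$ satisfies the fixed-point relation of Definition~\ref{def:L} with $L=\rho$. A monotonicity property of $L$-stationarity for cardinality constraints then extends the conclusion to every $L\geq\rho$.

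Let $f(\bm U)={\rm vec}(\bm U)'\bm P{\rm vec}(\bm U)+\bm q'{\rm vec}(\bm U)+\sum_i\alpha_i\|\bm U_{\bigcdot,i}\|_2^2$, so that column-wise $\nabla_{\bar{\bm u}_i} f = 2(\bm P_i+\alpha_i\bm I)\bar{\bm u}_i+\bm q_i$. Let $\mathcal X=\{\bm U:\|\bm U_{k+1,\bigcdot}\|_0\le z,\ k\in\mathcal T\}$.

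\textbf{Step 1: the limit satisfies $\bm U^\infty=\bm V^\infty$.} By Theorem~\ref{thm:converge2}, $\|\bm\Lambda^{t+1}-\bm\Lambda^t\|_F\to 0$. The $\bm\Lambda$-update reads $\bm\Lambda^{t+1}-\bm\Lambda^t=\rho(\bm U^{t+1}-\bm V^{t+1})$, so $\bm U^{t+1}-\bm V^{t+1}\to\bm 0$. Hence, with limits $(\bm U^\infty,\bm V^\infty,\bm\Lambda^\infty)$ as the statement presupposes, we get $\bm U^\infty=\bm V^\infty$.

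\textbf{Step 2: first-order condition of the $\bm U$-step.} From~\eqref{eq:ls}, the $\bm U$-minimization gives
\[
\nabla f(\bm U^{t+1})+\bm\Lambda^t+\rho(\bm U^{t+1}-\bm V^{t+1})=\bm 0 .
\]
Letting $t\to\infty$, using continuity of $\nabla f$ and Step 1, yields $\bm\Lambda^\infty=-\nabla f(\bm U^\infty)$.

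\textbf{Step 3: limit of the $\bm V$-step.} By Lemma~\ref{lemma:vmin}, $\bm V^{t+1}\in\mathcal P_{\mathcal X}(\bm Y^t)$ with $\bm Y^t=\bm U^t+\bm\Lambda^t/\rho$. Since $\mathcal X$ is closed, the projection map has a closed graph. Concretely, for every $\bm W\in\mathcal X$ we have $\|\bm Y^t-\bm V^{t+1}\|_F\le\|\bm Y^t-\bm W\|_F$, and $\bm V^{t+1}\in\mathcal X$. Passing to the limit gives $\bm V^\infty\in\mathcal X$ and $\|\bm Y^\infty-\bm V^\infty\|_F\le \|\bm Y^\infty-\bm W\|_F$, so $\bm V^\infty\in\mathcal P_{\mathcal X}(\bm Y^\infty)$. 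Substituting Steps 1 and 2 gives
\[
\bm U^\infty\in\mathcal P_{\mathcal X}\left(\bm U^\infty-\tfrac1\rho\nabla f(\bm U^\infty)\right),
\]
which is $\rho$-stationarity in vectorized form. This step is the delicate one, because the projection is discontinuous and set-valued. The closed-graph argument above avoids any continuity requirement, but it does rely on the iterates actually converging, which is the meaning the paper assigns to $(\cdot)^\infty$.

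\textbf{Step 4: extension to $L\geq\rho$.} Since $\mathcal X$ is separable over rows, the projection acts row by row as hard thresholding: keep the $z$ largest magnitudes in each row (with $m_i=1$). For a single row $\bm x$ with gradient $\bm g$, the Beck--Eldar characterization~\cite{beck2013sparsity} states that $\bm x\in\mathcal P(\bm x-\bm g/L)$ iff two conditions hold. First, $g_i=0$ for every $i$ in the support of $\bm x$. Second, for every $j$ off the support, $|g_j|\le L\,M_z(\bm x)$, where $M_z(\bm x)$ is the $z$-th largest magnitude (equal to $0$ when $\|\bm x\|_0<z$). I would re-derive this quickly from Lemma~\ref{lemma:vmin}. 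On the support, the argument equals $x_i$ exactly, and it must be among the $z$ largest. Off the support, the argument equals $-g_j/L$, and it must not exceed the kept entries.

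Both conditions are preserved when $L$ increases: the first does not depend on $L$, and the bound $L\,M_z(\bm x)$ only grows with $L$. Applying this row by row to the conclusion of Step 3 shows that ${\rm vec}(\bm U^\infty)$ is $L$-stationary for every $L\ge\rho$.
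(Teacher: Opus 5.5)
Your proof is correct. It shares its first two steps with the paper but passes the limit through the $\bm V$-step differently.

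Like you, the paper first gets $\bm U^\infty=\bm V^\infty$ from the $\bm\Lambda$-update and $\bm\Lambda^\infty=-\nabla f(\bm U^\infty)$ from \eqref{eq:important}. It then checks the two conditions of Beck--Eldar's Lemma 2.2 directly, by an entrywise case split on the hard-thresholding formula of Lemma~\ref{lemma:vmin}:
\begin{itemize}
\item Kept entries ($\|\bm Y^t_{k+1,i}\|_2\ge\eta^t_{k+1}$) give $\bm\Lambda^\infty_{k+1,i}=0$, hence a zero gradient component.
\item Zeroed entries give $\|\bm\Lambda^\infty_{k+1,i}\|_2\le\rho\eta_{k+1}\le L\eta_{k+1}$.
\end{itemize}
So the range $L\ge\rho$ comes out of one inequality. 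However, that argument splits into cases at finite $t$ and then takes limits. It therefore tacitly needs each entry to remain eventually in one case, or a further subsequence. It also needs the limiting threshold of $\bm Y^\infty$ to equal the $z$-th largest magnitude of $\bm U^\infty$, which the paper asserts only briefly.

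Your closed-graph argument avoids all of this. It uses only the inequality $\|\bm Y^t-\bm V^{t+1}\|_F\le\|\bm Y^t-\bm W\|_F$ for every $\bm W\in\mathcal X$, which holds by the definition of the $\bm V$-step as an $\arg\min$, regardless of tie-breaking in Lemma~\ref{lemma:vmin}. No support or threshold tracking is needed, and you get $\rho$-stationarity directly as the fixed-point inclusion of Definition~\ref{def:L}. Because consecutive differences of all iterates vanish, the same argument also works unchanged for subsequential limits.

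The cost is that you must separately establish monotonicity in $L$ through the Beck--Eldar characterization. The paper relies on that characterization anyway, as conditions (I)--(II). Your Step 4 derivation of it, including the convention $M_z(\bm x)=0$ when $\|\bm x\|_0<z$, is correct.
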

\begin{proof}
See Appendix~\ref{apx:stationary}.
\end{proof}
{ Since $\rho>\max\{{4\bar{\omega}}^2/{\underline{\omega}},2{\bar{\omega}}\}$, Theorem~\ref{thm:stationary} ensures that the limit point by ADMM, i.e., ${\rm vec}(\bm U^\infty)$, is at least an $L$-stationary point for any $L>\max\{{4\bar{\omega}}^2/{\underline{\omega}},2{\bar{\omega}}\}$. In other words, ${\rm vec}(\bm U^\infty)$ satisfies this necessary optimality condition.}
\subsection{Reweighted Minimization}\label{sec:reweighted}
The $\ell_2$-relaxation in~\eqref{eq:l1_first} penalizes larger numbers more heavily than the $\ell_0$-norm does. To better approximate the behavior of $\ell_0$ norm, we adopt the reweighted $\ell_2$ minimization approach~\cite{chartrand2008iteratively}. It increases the weights for small elements in the norm, i.e., replacing $\|\bar{\bm u}_i\|_2^2$ with $\bar{\bm u}_i'\bm{W}_i\bar{\bm u}_i$, where\footnote{The variable $\epsilon$ is a small positive constant introduced to ensure the inversion is well-defined.}
\begin{equation*}
{\bm W}_{i}={\rm diag}[{\bm 1}\oslash (\bar{\bm u}_i\odot \bar{\bm u}_i+\epsilon)].
\end{equation*}
By iteratively updating these weights, the reweighted method reduces the excessive penalization of larger numbers. Note that in this case,~\eqref{eq:ls} is slightly modified into
\begin{equation}\label{eq:reweight}
\bar{\bm u}^{t+1}_i	=(2\bm P_i+2\alpha_i\bm W_i+\rho^t \bm I)^{-1}(\rho^t{\bm v}^{t+1}_i-{\bm q}_i-{\bm \lambda}_i^t).
\end{equation}

\subsection{Strategy Refinement}
In~\eqref{eq:u}, we consolidate the design of $\zeta_{k,i}$ and $\bm u_{k,i}$ into a single variable, i.e., $\tilde{\bm u}_{k,i}=\zeta_{k,i}\bm u_{k,i}$. However, due to the $\ell_2$ relaxation in~\eqref{eq:l1_first}, this formulation tends to over-penalize nonzero entries in $\bm U$, driving them unnecessarily small {rather than achieving true sparsity. Consequently, the resulting $\bm U$ is not optimal for the underlying schedule and thus can be further improved.}

{To ``unpenalize'' and restore these overly suppressed elements, we introduce a strategy refinement step.} Specifically, we use the obtained matrix $\bm U$ to inform our scheduling variable $\zeta_{k,i}$, and then refine and recompute the control signals $\bm u_{k,i}$ based on that schedule.

Firstly, we extract $\zeta_{k,i}$ from $\bm U$. Recall
\begin{equation*}
\bm U=
\begin{bmatrix}
\tilde{\bm u}_{0,1} 	& \tilde{\bm u}_{0,2} 	&\cdots	&\tilde{\bm u}_{0,N}\\
\tilde{\bm u}_{1,1} 	& \tilde{\bm u}_{1,2} 	&\cdots	&\tilde{\bm u}_{1,N}\\
\vdots 			&\vdots			&\ddots 	&\vdots\\
\tilde{\bm u}_{T-1,1} 	& \tilde{\bm u}_{T-1,2} 	&\cdots	&\tilde{\bm u}_{T-1,N}
\end{bmatrix}.
\end{equation*}
The scheduling strategy $\zeta_{k,i}$ is then computed as
\begin{equation}\label{eq:zeta_compute}
\zeta_{k,i}=
\begin{cases}
0 &\text{if $\bm{\|\tilde{u}}_{k,i}\|_2=0$};\\
1 & \text{otherwise},
\end{cases}
\end{equation}
Then the control signal $\bm{u}_{k,i}$ can be directly computed as described in the following lemma.

\begin{lemma}[Proposition 3.1~\cite{shi2012finite}]\label{lemma:ucompute}{\rm
For each system $i$, given a schedule $\{\zeta_{k,i}\}_{k=0}^{T-1}$, the control signal $\bm u_{k,i}$ that minimizes $J_i$ is expressed as
\begin{equation}\label{eq:u_compute}
{\bm u}_{k,i}=-[\bm B_i'\bm S_{k+1,i}\bm B_i+\bm R_i]^{-1}\bm B_i'\bm S_{k+1,i}\bm A_i\bm{x}_{k,i},
\end{equation}
where $\bm S_{k,i}$ is computed recursively as
\begin{equation*}
\begin{aligned}
\bm S_{k,i}&=\bm A_i'\bm S_{k+1,i}\bm A_i+\bm Q_i\\
&\quad-\zeta_{k,i}\bm A'_i\bm S_{k+1,i}\bm B_i[\bm B_i'\bm S_{k+1,i}\bm B_i+\bm R_i]^{-1}\bm B_i'\bm S_{k+1,i}\bm A_i,
\end{aligned}
\end{equation*}
with initial condition $\bm S_{T,i}=\bm Q_i$.
}\end{lemma}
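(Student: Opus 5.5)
This result is the classical finite-horizon LQR with the input switched on and off, and I would prove it by backward dynamic programming on each system separately. Fix $i$ and the schedule $\{\zeta_{k,i}\}$. Since $J_i$ depends only on the $i$-th trajectory, the problem decouples across systems. When $\zeta_{k,i}=0$, the input $\bm u_{k,i}$ affects neither the dynamics~\eqref{eq:dynamics} nor the cost, so it is irrelevant. When $\zeta_{k,i}=1$, we have an ordinary LQR stage. I will show by backward induction on $k$ that the optimal cost-to-go from state $\bm x$ at time $k$ is $V_k(\bm x)=\bm x'\bm S_{k,i}\bm x$, with $\bm S_{k,i}\succeq 0$ generated by the stated recursion.

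The base case is $V_T(\bm x)=\bm x'\bm Q_i\bm x$, which gives $\bm S_{T,i}=\bm Q_i$. For the inductive step, assume $V_{k+1}(\bm x)=\bm x'\bm S_{k+1,i}\bm x$ with $\bm S_{k+1,i}\succeq0$. The Bellman equation is
\begin{equation*}
V_k(\bm x)=\min_{\bm u}\left\{\bm x'\bm Q_i\bm x+\zeta_{k,i}\bm u'\bm R_i\bm u+(\bm A_i\bm x+\zeta_{k,i}\bm B_i\bm u)'\bm S_{k+1,i}(\bm A_i\bm x+\zeta_{k,i}\bm B_i\bm u)\right\}.
\end{equation*}

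If $\zeta_{k,i}=1$, the bracket is a strictly convex quadratic in $\bm u$, because $\bm B_i'\bm S_{k+1,i}\bm B_i+\bm R_i\succ0$ follows from $\bm R_i\succ0$. Setting the gradient to zero gives exactly~\eqref{eq:u_compute}. Substituting back and completing the square gives
\begin{equation*}
\bm S_{k,i}=\bm A_i'\bm S_{k+1,i}\bm A_i+\bm Q_i-\bm A_i'\bm S_{k+1,i}\bm B_i[\bm B_i'\bm S_{k+1,i}\bm B_i+\bm R_i]^{-1}\bm B_i'\bm S_{k+1,i}\bm A_i .
\end{equation*}
If $\zeta_{k,i}=0$, the $\bm u$-terms vanish and $\bm S_{k,i}=\bm A_i'\bm S_{k+1,i}\bm A_i+\bm Q_i$. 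Both cases are captured by the single formula with the factor $\zeta_{k,i}$.

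Positive semidefiniteness of $\bm S_{k,i}$ must propagate through the induction, since the next step relies on it. This follows because $V_k$ is a minimum of nonnegative quantities, so $V_k\ge0$.

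The only delicate point is the inactive case $\zeta_{k,i}=0$. There the minimizer is not unique, so the formula~\eqref{eq:u_compute} is just one admissible choice. It is harmless, because it has no effect on the dynamics or on $J_i$. By the earlier equivalence remark, one may equally set $\bm u_{k,i}=0$. I would state this explicitly and then conclude optimality from the dynamic programming principle, which applies because the stage costs are additive and the dynamics are deterministic.
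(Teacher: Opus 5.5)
Your backward dynamic-programming proof is correct: induction on the quadratic cost-to-go $\bm x'\bm S_{k,i}\bm x$, split on $\zeta_{k,i}\in\{0,1\}$, is the standard argument for this switched Riccati recursion. You also correctly carry $\bm S_{k+1,i}\succeq 0$ through the induction, which together with $\bm R_i\succ0$ (not $\bm R_i\succ 0$ alone) makes $\bm B_i'\bm S_{k+1,i}\bm B_i+\bm R_i$ invertible, and you rightly note that $\bm u_{k,i}$ is non-unique when $\zeta_{k,i}=0$. The paper gives no proof of its own; it imports the result as Proposition 3.1 of Shi et al., so I cannot compare your route against an in-paper argument, though presumably the source uses the same standard derivation.
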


{For any given schedule $\{\zeta_{k,i}\}_{k=0}^{T-1}$, Lemma~\ref{lemma:ucompute} provides the optimal control signal $\bm u_{k,i}$ in a simple closed form. Consequently, the refinement step is guaranteed to yield a lower objective value for Problem~\eqref{eq:U}, while incurring only negligible computational overhead.} A summary of the proposed $\ell_2$ reweighted ADMM algorithm with strategy refinement is provided in Algorithm~\ref{alg:admm}.\footnote{The computational complexity is primarily driven by the $\bm V$- and $\bm U$-minimization steps, whose dominant operations are matrix inversion and sorting, respectively. Consequently, the overall time complexity is $O(\sum_{i=1}^Nm_i^3T^3+TN\log N)$, while the memory complexity is $O((\max_im_i)^2T^2 + N)$.}

\begin{algorithm}
	\caption{The $\ell_2$ reweighted ADMM algorithm.}
	\DontPrintSemicolon
	\label{alg:admm}
		\KwData{collision-free requirement $z$, penalty weight $\alpha_i$, regularization variable $\rho$}
		\KwResult{control data $\bm u_{k,i}$ and its schedules $\zeta_{k,i}$}
			initialize ${\bm \Lambda}^0={\bm 0}_{m_iT\times N}$\;
			initialize ${\bm U}^0$ by solving~\eqref{eq:l1_first} without constraints\;
			\Repeat(\tcc*[h]{reweighted minimization}){\rm convergence or maximum iterations}{
				compute $\bm W_i={\rm diag}[{\bm 1}\oslash (\bar{\bm u}_i^0\odot\bar{\bm u}_i^0 +\epsilon)]$\;
				$t\gets0$\;
				\Repeat(\tcc*[h]{ADMM}){\rm convergence or maximum iterations}{
					a) {$\bm V$-minimization}: compute ${\bm V}^{t+1}$ via~\eqref{eq:v-min}\;
					b) {$\bm U$-minimization}: obtain ${\bm U}^{t+1}$ via~\eqref{eq:reweight}\;
					c) {${\bm \Lambda}$-update}: ${\bm \Lambda}^{t+1}={\bm \Lambda}^t+\rho^t({\bm U}^{t+1}-{\bm V}^{t+1})$\;
					$t\gets t+1$
				}
				$\bm\Lambda^0\gets\bm\Lambda^t$\;
				compute ${\bm V}^{t+1}$ via~\eqref{eq:v-min} and ${\bm U}^{0}\gets{\bm V}^{t+1}$
			}
			compute $\zeta_{k,i}$ via~\eqref{eq:zeta_compute} with ${\bm U}\gets{\bm U}^{0}$\;
			compute $\bm u_{k,i}$ via~\eqref{eq:u_compute}
\end{algorithm}
\section{Simulation}\label{chap:sim}
{
In this section, we evaluate the performance of our proposed Algorithm~\ref{alg:admm} through two sets of simulations: (i) a case study comparing its performance with that of $\ell_1$-based approaches, and (ii) an extensive evaluation on a large collection of randomly generated spatially distributed systems.}

It is important to note that the choice of $\rho$ influences the convergence speed of the ADMM algorithm~\cite{magnusson2015convergence}. {Throughout the theoretical development in Section~\ref{sec:admm}, we consider a fixed $\rho$ value, and establish convergence guarantees under the condition $\rho>\max\{{4\bar{\omega}}^2/{\underline{\omega}},2\bar{\omega}\}$. In this part, however, we adopt the following adaptive update rule}\footnote{Since $\theta > 1$, there exists a finite iteration index $t_0$ such that $\rho^t=\rho_{\max}$ for all $t\geq t_0$. Hence, after $t_0$, the adaptive algorithm coincides with the fixed-$\rho$ scheme analyzed previously in Theorems~\ref{thm:converge1} and \ref{thm:converge2}. Nevertheless, a rigorous extension would require a separate examination of the transient phase $t<t_0$, which is beyond the scope of the present work.}: 
\begin{equation*}
\rho^{t+1}=\min\{\theta\rho^t,\rho_{\max}\},
\end{equation*}
where $\theta>1$ is a scaling parameter, and $\rho_{\max}>\max\{{4\bar{\omega}}^2/{\underline{\omega}},2{\bar{\omega}}\}$ is a prescribed upper bound. This heuristic allows aggressive updates during the early iterations, while gradually increasing $\rho$ to enhance convergence as the algorithm progresses. Such adaptive strategies have been reported to perform well in practice~\cite{magnusson2015convergence,peng2015proximal}.

Throughout the simulations, we use $\rho^0=0.004$, $\rho_{\max}=40$, and $\theta=1.2$. Moreover, we set $N=4$, $z=3$, $\bm{Q}_i=\bm{I}$ and $\bm{R}_i=\bm{I}$ for $i=1,\dots,N$. For notational simplicity, we denote $\bm{\alpha}=[\alpha_1,\dots,\alpha_N]$.

\subsection{Case Study}
Consider the scenario where the following $N=4$ systems operate over a shared network:
\begin{equation*}
\begin{aligned}
\bm A_1&=\begin{bmatrix}0.8930&-0.00062\\2.7738&0.8864\end{bmatrix},
\bm A_2=\begin{bmatrix}0.8895&-0.00294\\0.9447&0.9968\end{bmatrix},\\
\bm A_3&=\begin{bmatrix}0.9654&-0.00182\\-0.6759&0.9433\end{bmatrix},
\bm A_4=\begin{bmatrix}0.8227&-0.00168\\6.1233&0.9367\end{bmatrix},\\
\bm B_1&=\begin{bmatrix}-0.000034&0.0986\\-0.000157&0.1045\end{bmatrix}',
\bm B_2=\begin{bmatrix}-0.000097&0.1016\\ -0.000092&0.1014\end{bmatrix}',\\
\bm B_3&=\begin{bmatrix}-0.000096&0.1016\\ -0.000127&0.1031\end{bmatrix}',
\bm B_4=\begin{bmatrix}-0.000095&0.1015\\-0.000063&0.1000\end{bmatrix}'.
\end{aligned}
\end{equation*}
The initial state of each system is a 2-dimensional random vector, with each element sampled independently from a uniform distribution over the interval $(0,1)$. The horizon length is set to $T=30$.

{Fig.~\ref{fig:compare} shows the resulting total number of transmissions (NoT), i.e., $\sum_{k=0}^{T-1}\sum_{i=1}^N\zeta_{k,i}$, and the corresponding cost value $J$ for different choices of $\bm{\alpha}$. We observe that both reweighted approaches outperform plain $\ell_1$ minimization, yielding a more favorable trade-off between actuator effort and performance. Secondly, although reweighted $\ell_1$ and reweighted $\ell_2$ achieve comparable sparsity-recovery performance, our proposed $\ell_2$-based algorithm is more computationally efficient. As reported in Table~\ref{tb:l1}, it requires approximately half the number of iterations and exhibits a significantly lower per-iteration computation time. These results indicate that, besides offering theoretical guarantees, the proposed formulation is also effective in practice.

Fig.~\ref{fig:compare} also illustrates the effect of the refinement step. The amount of improvement depends strongly on the quality of the sparse approximation obtained previously. When sparsity is promoted less accurately (plain $\ell_1$ minimization), the refinement procedure can substantially reduce the resulting cost $J$. In contrast, for the two reweighted methods, since the weighting scheme already partially mitigates the shrinkage effect, refinement yields a smaller improvement. Nevertheless, refinement consistently lowers the cost in all three approaches, confirming its effectiveness as a post-processing step.}

\begin{figure}[!htbp]
	\centering
	\includegraphics[width=\linewidth]{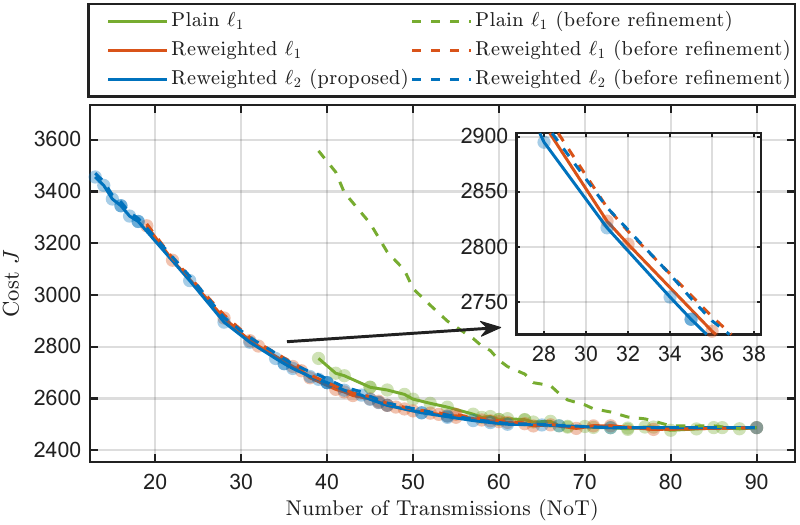}
	\caption{Trade-offs between NoT and $J$. The curves are generated by varying $\bm{\alpha}$ from $\bm{0}_4$ to $\bm{10}_4$.}
	\label{fig:compare}
\end{figure}
    
\begin{table}[!htbp]
	\centering
	
	\caption{Average computational burden of the reweighted $\ell_1$ and reweighted $\ell_2$ schemes over 36 simulation runs.}
	\label{tb:l1}
	\begin{tabular}{lcc}
	\hline
                 		& Reweighted $\ell_1$  & Reweighted $\ell_2$  \\ \hline
	Number of Iterations     	& 97.1388  & 49.7222  \\
	Time per Iteration (s) 	& 15.4248  & 0.0705  \\ \hline
\end{tabular}
\end{table}
{
\subsection{Spatially Distributed Systems}
\begin{table*}[!htbp]
	\caption{Average relative NoT and relative control cost $J$ for randomly generated spatially distributed systems.}
	\label{tb:random}
	\centering
	\begin{tabular}{|c|cccc|cccc|cccc|}
		\hline
		\multicolumn{1}{|c|}{} & \multicolumn{4}{c|}{Stable} & \multicolumn{4}{c|}{Unstable} & \multicolumn{4}{c|}{Mixed} \\ \hline
		$\bm\alpha$    & $\bm{0}_4$       &  $\bm{1}_4$      & $\bm{2}_4$ &  $\bm{5}_4$       &  $\bm{0}_4$        &  $\bm{1}_4$   & $\bm{2}_4$     &  $\bm{5}_4$       &  $\bm{0}_4$       &  $\bm{0.1}_4$     & $\bm{0.2}_4$  &  $\bm{0.5}_4$      \\ \hline
		Relative NoT &0.75        &  0.62       &   0.48      &       0.31             &  0.75       &  0.30       &  0.24       & 0.18    & 0.75 &0.60 &0.58&0.57\\ \hline
		Relative Cost $J$  &  1.08       & 1.09        &1.10         & 1.18         &  2.07        & 2.12        & 2.14        &2.15         &1.11  &1.12 &1.17     &1.20\\ \hline
	\end{tabular}
\end{table*}
We evaluate the proposed algorithm on a class of randomly generated spatially distributed networks inspired by~\cite{motee2008optimal}.

Consider $N=4$ systems, each consisting of two spatially distributed nodes, whose locations are generated randomly and independently within a $10\times10$ square region. For system $i$, the dynamics are
\begin{equation*}
	[\dot{\bm x}_i]_j = -[{\bm x}_i]_j + \sum_{j\neq s}e^{-d(j,s)} [{\bm x}_i]_s + \bm{u}_i,
\end{equation*}
where $[\cdot]_j$ is the $j$-th element of a vector, $d(j,s)$ is determined by the Euclidean distance between nodes $j$ and $s$. The compact form of the system is then given by
\begin{equation*}
	\dot{\bm x}_i = \bm{A}_{i}^c \bm{x}_i + \bm{u}_i, 
\end{equation*}
where $\bm{A}_{i}^c$ is uniquely determined by the node locations.

Considering a unit sampling time, the discrete system is
\begin{equation*}
\bm{x}_{k+1,i} = \bm{A}_i\bm{x}_{k,i} + \bm{B}_i \bm{u}_{k,i},
\end{equation*}
with $\bm{A}_i = e^{\bm{A}_{i}^c}$ and $\bm{B}_i=e^{\bm{A}_{i}^c}\int^1_0e^{-\bm{A}_{i}^c\tau}d\tau$. The initial state of each system is selected independently and uniformly from $(0,5)$. The horizon length is set to $T=10$.

It is easy to verify that the above generated systems are unstable. To evaluate our proposed algorithm under different dynamical characteristics, we generated 150 independent system realizations, which can be partitioned into the following three categories:
\begin{itemize}
	\item {\bf Unstable}: 50 realizations in which all four systems are generated according to the procedure described above;
	\item {\bf Stable}: 50 realizations in which the $\bm A_i$ matrices are first generated according to the above procedure, and subsequently scaled using their spectral radius so that all four systems are stable;
	\item {\bf Mixed}: 50 realizations consisting of two stable and two unstable systems, constructed from the same generation procedure.
\end{itemize}
The average relative NoT and cost $J$ are provided in Table~\ref{tb:random}, where the values are normalized by a baseline policy in which every controller transmits at every time step.\footnote{Since at most three out of four controllers can transmit simultaneously, the maximum achievable relative NoT under the collision-free constraint is $3/4 = 0.75$, which corresponds to the case where $\bm\alpha=\bm0_4$.}

The results demonstrate that our framework effectively balances actuator effort and performance across all system classes. In particular, increasing $\bm\alpha$ consistently reduces NoT while maintaining an acceptable increase in $J$, which indicates that our proposed framework is valid for various system dynamics. 

An interesting phenomenon is observed in the mixed scenario. Since unstable systems are significantly more sensitive to transmission absence, the proposed algorithm tends to allocate more communication resources to them. This behavior is reflected in the relative NoT, which remains close to $0.5$, suggesting a policy in which the unstable systems are scheduled frequently and the stable ones only intermittently.}

\section{Conclusion}
In this paper, we tackled the problem of control data scheduling in networked control systems with shared communication channels between remote controllers and actuators. We introduced a novel scheduling mechanism that jointly coordinates control data transmissions across multiple controllers and time steps and developed an ADMM-based algorithm for it. The algorithm is computationally efficient and offers a clear physical interpretation. Moreover, we prove that it is convergent and the returned solution is at least an $L$-stationary point of the relaxed problem, which is a necessary condition for optimality. Simulation results demonstrated that the proposed approach effectively reduces the communication burden and actuation effort while incurring a slight degradation in system performance. Future work may focus on analyzing the controllability/stability under the scheduling strategy.

\appendices
\section{Proof of Lemma~\ref{lemma:iteration}}\label{apx:iteration}
For the function $\mathcal{L}({\bm U}^t,{\bm V}^t,{\bm \Lambda}^t)$, we define the difference $\Delta$ between each iteration $t$ as 
\begin{equation}\label{eq:delta}
\begin{aligned}
\Delta	&\triangleq\mathcal{L}({\bm U}^{t+1},{\bm V}^{t+1},{\bm \Lambda}^{t+1})-\mathcal{L}({\bm U}^{t},{\bm V}^{t},{\bm \Lambda}^{t}).
\end{aligned}
\end{equation}

Since ${\bm V}^{t+1}=\arg\min_{\bm V}\mathcal{L}({\bm U}^t,{\bm V},{\bm \Lambda}^t)$, it follows that
\begin{equation}\label{eq:L1}
\mathcal{L}({\bm U}^{t},{\bm V}^{t+1},{\bm \Lambda}^{t})-\mathcal{L}({\bm U}^{t},{\bm V}^{t},{\bm \Lambda}^{t})\leq 0.
\end{equation} 

Additionally, since $\mathcal{L}({\bm U},{\bm V},{\bm \Lambda})$ is strongly convex in $\bm U$ and ${\bm U}^{t+1}=\arg\min_{\bm U}\mathcal{L}({\bm U},{\bm V}^{t+1},{\bm \Lambda}^t)$, we have
\begin{equation}\label{eq:L2}
\begin{aligned}
&\mathcal{L}({\bm U}^{t+1},{\bm V}^{t+1},{\bm \Lambda}^{t}) - \mathcal{L}({\bm U}^{t},{\bm V}^{t+1},{\bm \Lambda}^{t})\\
&\quad\leq-{\underline{\omega}}\|\bm{U}^{t+1}-\bm{U}^{t}\|_F^2.
\end{aligned}
\end{equation}

Finally, define $\mathcal{F}_i({\bm u})\triangleq\bm u'{\bm P}_i\bm u + {\bm q}_i'\bm u+\alpha_i\|{\bm u}\|^2_2$. With the optimality of $\bar{\bm u}^{t+1}_i$ in~\eqref{eq:ls}, we have
\begin{equation*}
\rho(\bm{u}_i^{t+1}-\bm{v}_i^{t+1})=-2(\bm{P}_i+\alpha_i\bm{I})\bm{u}_i^{t+1}-\bm{q}_i-\bm{\lambda}_i^t.
\end{equation*}
Combining it with the ${\bm \Lambda}$-update, i.e.,  ${\bm \Lambda}^{t+1}={\bm \Lambda}^t+\rho({\bm U}^{t+1}-{\bm V}^{t+1})$, we obtain
\begin{equation}\label{eq:important}
\bm\lambda_i^t=-\nabla\mathcal{F}_i(\bar{\bm u}_i^t), 
\end{equation}
where $\nabla \mathcal{F}_i({\bm u})=2(\bm P_i+\alpha_i \bm I)\bm u+\bm q_i$ is the gradient of $\mathcal{F}_i({\bm u})$.

Then, from the ${\bm \Lambda}$-update and~\eqref{eq:important}, we derive the following:
\begin{equation}\label{eq:L3}
\begin{aligned}
&~\mathcal{L}({\bm U}^{t+1},{\bm V}^{t+1},{\bm \Lambda}^{t+1}) - \mathcal{L}({\bm U}^{t+1},{\bm V}^{t+1},{\bm \Lambda}^{t})\\
=&~{\rm Tr}[({\bm \Lambda}^{t+1}-{\bm \Lambda}^{t})'({\bm U}^{t+1}-{\bm V}^{t+1})]\\
=&~\frac{1}{\rho}\|{\bm \Lambda}^{t+1}-{\bm \Lambda}^{t}\|_F^2=\frac{1}{\rho}\sum_{i=1}^N\|{\bm \lambda}_i^{t+1}-{\bm \lambda}_i^{t}\|_2^2\\
=&~\frac{4}{\rho}\sum_{i=1}^N\|(\bm{P}_i+\alpha_i\bm{I})(\bm{u}_i^{t+1}-\bm{u}_i^t)\|_2^2\leq\frac{4}{\rho}\bar{\omega}^2\|\bm{U}^{t+1}-\bm{U}^t\|_F^2.
\end{aligned}
\end{equation}

Substituting~\eqref{eq:L1}\eqref{eq:L2}, and \eqref{eq:L3} into~\eqref{eq:delta} and defining $\kappa \triangleq \underline{\omega} - 4\bar{\omega}^2/\rho$, we obtain
\begin{equation*}
\Delta\leq-\left({\underline{\omega}}-\frac{4}{\rho}\bar{\omega}^2\right)\|\bm{U}^{t+1}-\bm{U}^t\|_F^2=-\kappa\|\bm{U}^{t+1}-\bm{U}^t\|_F^2.
\end{equation*}
Since $\rho>{4\bar{\omega}^2}/{\underline{\omega}}$, we have $\kappa>0$, which completes the proof.
\section{Proof of Lemma~\ref{lemma:lbound}}\label{apx:lbound}
\begin{equation}\label{eq:q2p}
\mathcal{F}_i(\bm v)-\mathcal{F}_i({\bm u})\leq \nabla \mathcal{F}_i({\bm u})'(\bm v-{\bm u})+{\bar{\omega}_i}\|{\bm v}-{\bm u}\|_2^2.
\end{equation}

Substituting~\eqref{eq:important} into~\eqref{eq:q2p} and considering $\bm v_i^t$ and $\bar{\bm u}_i^t$, we obtain
\begin{equation}\label{eq:lp}
\mathcal{F}_i(\bm v_i^t)-{\bar{\omega}_i}\|{\bm v}_i^t-\bar{\bm u}_i^t\|_2^2\leq \mathcal{F}_i(\bar{\bm u}_i^t) +(\bm\lambda_i^t)'(\bar{\bm u}_i^t-{\bm v}_i^t).
\end{equation}

Therefore,
\begin{equation*}
\begin{aligned}
\mathcal{L}({\bm U}^t,{\bm V}^t,{\bm \Lambda}^t)	&=\sum_{i=1}^N\left[\mathcal{F}_i(\bar{\bm u}_i^t)+({\bm \lambda}_i^t)'(\bar{\bm u}_i^t-\bm v_i^t)\right]\\
										&\quad+\frac{\rho}{2}\sum_{i=1}^N\|\bar{\bm u}^t_i-{\bm v}_i^{t}\|_2^2+\sum_{k=0}^{T-1}\mathcal{I}\left({\bm V}_{k+1,\bigcdot}\right)\\
										&\geq \sum_{i=1}^N\left\{\mathcal{F}_i(\bm v_i^t)+(\frac{\rho}{2}-{\bar{\omega}_i})\|\bar{\bm u}_i^t-{\bm v}_i^t\|_2^2\right\}\\
										&\geq  \sum_{i=1}^N\mathcal{F}_i(\bm v_i^t),
\end{aligned}
\end{equation*}
where the first inequality follows from~\eqref{eq:lp} and the fact that $\sum_{k=0}^{T-1}\mathcal{I}\left({\bm V}_{k+1,\bigcdot}\right)\geq0$, and the second inequality holds because $\rho>2{\bar{\omega}}$. Furthermore, $\mathcal{F}_i(\bm u)$ is lower bounded since it is quadratic with $\bm P_i+\alpha_i\bm I\succ0$. Therefore, we conclude that $\mathcal{L}({\bm U}^t,{\bm V}^t,{\bm \Lambda}^t)$ is lower bounded, thus completing the proof.
\section{Proof of Theorem~\ref{thm:converge2}}\label{apx:converge2}
From Theorem~\ref{thm:converge1}, we know $\{\mathcal{L}({\bm U}^t,{\bm V}^t,{\bm \Lambda}^t)\}_{t=0}^\infty$ converges, i.e., $\mathcal{L}({\bm U}^{t+1},{\bm V}^{t+1},{\bm \Lambda}^{t+1})-\mathcal{L}({\bm U}^t,{\bm V}^t,{\bm \Lambda}^t)\to0$ as $t\to\infty$. Therefore, Lemma~\ref{lemma:iteration} indicates that $\|{\bm U}^{t+1}-{\bm U}^t\|_F^2\to0$.

From~\eqref{eq:L3}, we obtain $\|{\bm \Lambda}^{t+1}-{\bm \Lambda}^{t}\|_F^2\leq4\bar{\omega}\|\bm{U}^{t+1}-\bm{U}^t\|_F^2\to0$ as $t\to\infty$.

To analyze the convergence of $\|{\bm V}^{t+1}-{\bm V}^t\|_F^2$, we use the $\bm \Lambda$-update. Specifically, we have
\begin{equation*}
\begin{aligned}
	&~\|{\bm V}^{t+1}-{\bm V}^t\|_F^2\\
=	&~\|\bm U^{t+1}-\bm U^t-\frac{1}{\rho}(\bm\Lambda^{t+1}-\bm\Lambda^{t})+\frac{1}{\rho}(\bm\Lambda^{t}-\bm\Lambda^{t-1})\|_F^2\\
\leq	&\left[\|\bm U^{t+1}-\bm U^t\|_F+\frac{1}{\rho}\|\bm\Lambda^{t+1}-\bm\Lambda^{t}\|_F+\frac{1}{\rho}\|\bm\Lambda^{t}-\bm\Lambda^{t-1}\|_F\right]^2\\
\leq 	&~3\|\bm U^{t+1}-\bm U^t\|_F^2+\frac{3}{\rho^2}\|\bm\Lambda^{t+1}-\bm\Lambda^{t}\|_F^2+\frac{3}{\rho^2}\|\bm\Lambda^{t}-\bm\Lambda^{t-1}\|_F^2.
\end{aligned}
\end{equation*}
As $t \to \infty$, each term on the right-hand side converges to zero, $\|{\bm V}^{t+1}-{\bm V}^t\|_F^2 \to 0$ as well.

\section{Proof of Theorem~\ref{thm:stationary}}\label{apx:stationary}
We assume that the limit point $({\bm U}^{\infty},{\bm V}^{\infty},{\bm \Lambda}^{\infty})$ is unique. If multiple limit points exist, the subsequent derivation remains valid by considering an appropriate convergent subsequence.

To prove Theorem~\ref{thm:stationary}, it suffices to verify the following conditions~\cite[Lemma 2.2]{beck2013sparsity}:
\begin{enumerate}[(I)]
\item For $k\in\mathcal{T}$, $\|{\bm U}^\infty_{k+1,\bigcdot}\|_0\leq z$;
\item For $k\in\mathcal{T}$ and $i\in\mathcal{N}$, 
\begin{equation*}
\|\nabla_{k+1} \mathcal{F}_i({\bm{\bar{u}}_i^\infty})\|_2
\begin{cases}
\leq L\eta_{k+1}	&{\rm if~} {\bm U}_{k+1,i}^\infty=0;\\
=0				&{\rm otherwise},
\end{cases}
\end{equation*}
where $\mathcal{F}_i(\cdot)$ is defined in Appendix~\ref{apx:iteration}, $\nabla_{k+1} \mathcal{F}_i({\cdot})$ denotes the $k+1$-th component of its gradient and $\eta_{k+1}$ is the $z$-th largest value of $\|\bm U^\infty_{k+1,i}\|_2$ for $i\in\mathcal{N}$.
\end{enumerate}

We begin by recalling Theorem~\ref{thm:converge2} and the ADMM update steps. The convergence of $\bm\Lambda^{t+1}-\bm\Lambda^t$ implies
\begin{equation*}
{\bm U}^{\infty}-{\bm V}^{\infty}=\lim_{t\to\infty}\left({\bm U}^{t+1}-{\bm V}^{t+1}\right)=\lim_{t\to\infty}\frac{{\bm \Lambda}^{t+1}-{\bm \Lambda}^t}{\rho}=0.
\end{equation*}
Thus, $\|{\bm U}^\infty_{k+1,\bigcdot}\|_0=\|{\bm V}^\infty_{k+1,\bigcdot}\|_0\leq z$, which proves (I).

We now verify (II) by analyzing the ${\bm V}$-minimization step given in~\eqref{eq:v-min}. Consider the two cases:

\noindent \underline{\it Case 1:} $\|{\bm Y}^t_{k+1,i}\|_2\geq \eta^t_{k+1}$. In this case, the update is
\begin{equation*}
{\bm V}^{t+1}_{k+1,i}={\bm Y}_{k+1,i}^t={\bm U}_{k+1,i}^t+{\bm\Lambda^t_{k+1,i}}/{\rho}.
\end{equation*}
Taking the limit $t\to\infty$ and using the fact that ${\bm U}^{\infty}={\bm V}^{\infty}$, we obtain $\bm\Lambda_{k+1,i}^\infty=0$. Combining with~\eqref{eq:important}, we conclude that $\bm\Lambda_{k+1,i}^\infty=-\nabla_{k+1} \mathcal{F}_i({\bm{\bar{u}}_i^\infty})=0$.

\noindent\underline{\it Case 2:} $\|{\bm Y}^t_{k+1,i}\|_2< \eta^t_{k+1}$. Here, the update yields ${\bm V}^{t+1}_{k+1,i}=0$. Taking the limit $t\to\infty$, we obtain ${\bm V}^{\infty}_{k+1,i}={\bm U}^{\infty}_{k+1,i}=0$. Furthermore, from Case 1 we have $\eta_{k+1}^\infty = \eta_{k+1}$. Therefore,
\begin{equation*}
\|{\bm Y}^{\infty}_{k+1,i}\|_2=\|{\bm U}_{k+1,i}^\infty+{\bm\Lambda^\infty_{k+1,i}}/{\rho}\|_2=\|{\bm\Lambda^\infty_{k+1,i}}/{\rho}\|_2\leq\eta_{k+1}.
\end{equation*}
Thus, $\|\nabla_{k+1} \mathcal{F}_i({\bm{\bar{u}}_i^\infty})\|_2=\|\bm\Lambda^\infty_{k+1,i}\|_2\leq\rho\eta_{k+1}\leq L\eta_{k+1}$.

Since the two cases correspond exactly to (II), the proof of Theorem~\ref{thm:stationary} is complete.
\bibliographystyle{IEEEtran}
\bibliography{ref.bib}

@article{xu2018optimal,
	author = {Xu, J. and Wen, C. and Xu, D.},
	journal = {Sci. China Inf. Sci.},
	pages = {1--15},
	publisher = {Springer},
	title = {Optimal control data scheduling with limited controller-plant communication},
	volume = {61},
	year = {2018}}

@article{jiao2022actuator,
	author = {Jiao, J. and Maity, D. and Baras, J. S. and Hirche, S.},
	journal = {IEEE Control Syst. Lett.},
	pages = {7--12},
	publisher = {IEEE},
	title = {Actuator scheduling for linear systems: A convex relaxation approach},
	volume = {7},
	year = {2022}}

@book{karp2010reducibility,
	author = {Karp, R. M.},
	publisher = {Springer},
	title = {Reducibility among Combinatorial Problems},
	year = {2010}}

@article{boyd2011distributed,
	author = {Boyd, Stephen and Parikh, Neal and Chu, Eric and Peleato, Borja and Eckstein, Jonathan and others},
	journal = {Found. Trends Mach. Learn.},
	number = {1},
	pages = {1--122},
	publisher = {Now Publishers, Inc.},
	title = {Distributed optimization and statistical learning via the alternating direction method of multipliers},
	volume = {3},
	year = {2011}}

@article{shi2022cardinality,
	author = {Shi, Zhang-Lei and Li, Xiao Peng and Leung, Chi-Sing and So, Hing Cheung},
	journal = {IEEE Trans. Neural Netw. Learn. Syst.},
	number = {2},
	pages = {2901--2909},
	publisher = {IEEE},
	title = {Cardinality constrained portfolio optimization via alternating direction method of multipliers},
	volume = {35},
	year = {2022}}

@article{zhong2024sparse,
	author = {Zhong, Yuxing and Yang, Nachuan and Huang, Lingying and Shi, Guodong and Shi, Ling},
	journal = {Automatica},
	pages = {111670},
	publisher = {Elsevier},
	title = {Sparse sensor selection for distributed systems: An $l_1$-relaxation approach},
	volume = {165},
	year = {2024}}

@inproceedings{chartrand2008iteratively,
	author = {Chartrand, Rick and Yin, Wotao},
	booktitle = {Proc. IEEE Int. Conf. Acoust. Speech Signal Process.},
	pages = {3869--3872},
	title = {Iteratively reweighted algorithms for compressive sensing},
	year = {2008}}

@article{magnusson2015convergence,
	author = {Magn{\'u}sson, Sindri and Weeraddana, Pradeep Chathuranga and Rabbat, Michael G and Fischione, Carlo},
	journal = {IEEE Trans. Control Netw. Syst.},
	number = {3},
	pages = {296--309},
	publisher = {IEEE},
	title = {On the convergence of alternating direction {Lagrangian} methods for nonconvex structured optimization problems},
	volume = {3},
	year = {2015}}

@article{peng2015proximal,
	author = {Peng, Zheng and Chen, Jianli and Zhu, Wenxing},
	journal = {J. Glob. Optim.},
	number = {4},
	pages = {711--728},
	publisher = {Springer},
	title = {A proximal alternating direction method of multipliers for a minimization problem with nonconvex constraints},
	volume = {62},
	year = {2015}}

@article{shi2012finite,
	author = {Shi, Ling and Yuan, Ye and Chen, Jiming},
	journal = {IEEE Trans. Autom. Control},
	number = {7},
	pages = {1835--1841},
	publisher = {IEEE},
	title = {Finite horizon {LQR} control with limited controller-system communication},
	volume = {58},
	year = {2012}}

@article{hespanha2007survey,
	author = {Hespanha, Joo P and Naghshtabrizi, Payam and Xu, Yonggang},
	journal = {Proc. IEEE},
	number = {1},
	pages = {138--162},
	publisher = {IEEE},
	title = {A survey of recent results in networked control systems},
	volume = {95},
	year = {2007}}

@article{walsh2001scheduling,
	author = {Walsh, Gregory C and Ye, Hong},
	journal = {IEEE Control Syst. Mag.},
	number = {1},
	pages = {57--65},
	publisher = {IEEE},
	title = {Scheduling of networked control systems},
	volume = {21},
	year = {2001}}

@article{pasqualetti2014controllability,
	author = {Pasqualetti, Fabio and Zampieri, Sandro and Bullo, Francesco},
	journal = {IEEE Trans. Control Netw. Syst.},
	number = {1},
	pages = {40--52},
	publisher = {IEEE},
	title = {Controllability metrics, limitations and algorithms for complex networks},
	volume = {1},
	year = {2014}}

@article{summers2014optimal,
	author = {Summers, Tyler H and Lygeros, John},
	journal = {IFAC Proc. Vol.},
	number = {3},
	pages = {3784--3789},
	publisher = {Elsevier},
	title = {Optimal sensor and actuator placement in complex dynamical networks},
	volume = {47},
	year = {2014}}

@article{chan2007state,
	author = {Chan, Ching Chuen},
	journal = {Proc. IEEE},
	number = {4},
	pages = {704--718},
	publisher = {IEEE},
	title = {The state of the art of electric, hybrid, and fuel cell vehicles},
	volume = {95},
	year = {2007}}

@inproceedings{nishida2024sparsity,
	author = {Nishida, Shumpei and Okano, Kunihisa},
	booktitle = {Proc. Eur. Control Conf.},
	organization = {IEEE},
	pages = {3612--3617},
	title = {Sparsity-constrained linear quadratic regulation problem: Greedy approach with performance guarantee},
	year = {2024}}

@article{kishida2018hands,
	author = {Kishida, Masako and Barforooshan, Mohsen and Nagahara, Masaaki},
	journal = {IFAC-PapersOnLine},
	number = {23},
	pages = {355--360},
	publisher = {Elsevier},
	title = {Hands-off control for discrete-time linear systems subject to polytopic uncertainties},
	volume = {51},
	year = {2018}}

@article{beck2013sparsity,
	author = {Beck, Amir and Eldar, Yonina C},
	journal = {SIAM J. Optim.},
	number = {3},
	pages = {1480--1509},
	publisher = {SIAM},
	title = {Sparsity constrained nonlinear optimization: Optimality conditions and algorithms},
	volume = {23},
	year = {2013}}

@article{bertsekas1997nonlinear,
	author = {Bertsekas, Dimitri P},
	journal = {J. Oper. Res. Soc.},
	number = {3},
	pages = {334--334},
	publisher = {Taylor \& Francis},
	title = {Nonlinear programming},
	volume = {48},
	year = {1997}}

@article{mesbah2016stochastic,
	title={Stochastic model predictive control: An overview and perspectives for future research},
	author={Mesbah, Ali},
	journal={IEEE Control Syst. Mag.},
	volume={36},
	number={6},
	pages={30--44},
	year={2016},
	publisher={IEEE}
}

@article{siami2020separation,
	title={A separation theorem for joint sensor and actuator scheduling with guaranteed performance bounds},
	author={Siami, Milad and Jadbabaie, Ali},
	journal={Automatica},
	volume={119},
	pages={109054},
	year={2020},
	publisher={Elsevier}
}

@article{nagahara2015maximum,
	title={Maximum hands-off control: a paradigm of control effort minimization},
	author={Nagahara, Masaaki and Quevedo, Daniel E and Ne{\v{s}}i{\'c}, Dragan},
	journal={IEEE Trans. Autom. Control},
	volume={61},
	number={3},
	pages={735--747},
	year={2015},
	publisher={IEEE}
}

@article{siami2020deterministic,
	title={Deterministic and randomized actuator scheduling with guaranteed performance bounds},
	author={Siami, Milad and Olshevsky, Alexander and Jadbabaie, Ali},
	journal={IEEE Trans. Autom. Control},
	volume={66},
	number={4},
	pages={1686--1701},
	year={2020},
	publisher={IEEE}
}

@article{ballotta2024pointwise,
	title={Pointwise-sparse actuator scheduling for linear systems with controllability guarantee},
	author={Ballotta, Luca and Joseph, Geethu and Thete, Irawati Rahul},
	journal={IEEE Control Syst. Lett.},
	volume={8},
	pages={2361--2366},
	year={2024},
	publisher={IEEE}
}

@article{joseph2020controllability,
	title={Controllability of linear dynamical systems under input sparsity constraints},
	author={Joseph, Geethu and Murthy, Chandra R},
	journal={IEEE Trans. Autom. Control},
	volume={66},
	number={2},
	pages={924--931},
	year={2020},
	publisher={IEEE}
}

@article{pasand2017structural,
	title={Structural properties, LQG control and scheduling of a networked control system with bandwidth limitations and transmission delays},
	author={Pasand, Mohammad Mahdi Share and Montazeri, Mohsen},
	journal={IEEE/CAA J. Autom. Sin.},
	volume={4},
	number={4},
	pages={817--829},
	year={2017},
	publisher={IEEE}
}

@article{ikeda2022sparse,
	title={Sparse control node scheduling in networked systems based on approximate controllability metrics},
	author={Ikeda, Takuya and Kashima, Kenji},
	journal={IEEE Trans. Control of Netw. Syst.},
	volume={9},
	number={3},
	pages={1166--1177},
	year={2022},
	publisher={IEEE}
}

@article{olshevsky2020relaxation,
	title={On a relaxation of time-varying actuator placement},
	author={Olshevsky, Alex},
	journal={IEEE Control Syst. Lett.},
	volume={4},
	number={3},
	pages={656--661},
	year={2020},
	publisher={IEEE}
}

@article{xu2016empirical,
	title={An empirical study of {ADMM} for nonconvex problems},
	author={Xu, Zheng and De, Soham and Figueiredo, Mario and Studer, Christoph and Goldstein, Tom},
	journal={arXiv preprint arXiv:1612.03349},
	year={2016}
}

@book{bertsekas2012dynamic,
	title={Dynamic programming and optimal control: Volume I},
	author={Bertsekas, Dimitri},
	volume={4},
	year={2012},
	publisher={Athena scientific}
}

@article{ma2022smart,
	title={Smart actuation for end-edge industrial control systems},
	author={Ma, Yehan and Wang, Yebin and Di Cairano, Stefano and Koike-Akino, Toshiaki and Guo, Jianlin and Orlik, Philip and Guan, Xinping and Lu, Chenyang},
	journal={IEEE Trans. Autom. Sci. Eng.},
	volume={21},
	number={1},
	pages={269--283},
	year={2022},
	publisher={IEEE}
}

@article{mozaffari2019tutorial,
	title={A tutorial on {UAVs} for wireless networks: Applications, challenges, and open problems},
	author={Mozaffari, Mohammad and Saad, Walid and Bennis, Mehdi and Nam, Young-Han and Debbah, M{\'e}rouane},
	journal={IEEE Commun. Surv. Tutor.},
	volume={21},
	number={3},
	pages={2334--2360},
	year={2019},
	publisher={IEEE}
}

@article{daubechies2010iteratively,
	title={Iteratively reweighted least squares minimization for sparse recovery},
	author={Daubechies, Ingrid and DeVore, Ronald and Fornasier, Massimo and G{\"u}nt{\"u}rk, C Sinan},
	journal={Commun. Pure Appl. Math.},
	volume={63},
	number={1},
	pages={1--38},
	year={2010},
	publisher={Wiley Online Library}
}

@article{motee2008optimal,
  title={Optimal control of spatially distributed systems},
  author={Motee, Nader and Jadbabaie, Ali},
  journal={IEEE Transactions on Automatic Control},
  volume={53},
  number={7},
  pages={1616--1629},
  year={2008},
  publisher={IEEE}
}

\end{document}